\newtheorem{thm}{Theorem}[section]
\newtheorem{cor}[thm]{Corollary}
\newtheorem{lem}[thm]{Lemma}
\newtheorem{quest}[thm]{Question}
\newtheorem{coj}{Conjecture}[section]
\theoremstyle{definition}
\newtheorem{defn}[thm]{Definition}
\newtheorem{exmp}[thm]{Example}
\theoremstyle{remark}
\newtheorem{rem}[thm]{Remark}
\let\c@equation\c@thm
\numberwithin{equation}{section}
\title{All 2-positive linear maps from $M_3(\mathbb{C})$ to $M_3(\mathbb{C})$ are decomposable}
\author{Yu Yang}
\address[Yu Yang]{Department of
Mathematics, National University of Singapore, 10 Lower Kent Ridge Road, Singapore 119076, Republic of Singapore}
\email{a0086285@nus.edu.sg}
\author{Denny H. Leung}
\address[Denny H. Leung]{Department of
Mathematics, National University of Singapore, 10 Lower Kent Ridge Road, Singapore 119076, Republic of Singapore}
\email{matlhh@nus.edu.sg}
\author{Wai-Shing Tang}
\address[Wai-Shing Tang]{Department of
Mathematics, National University of Singapore, 10 Lower Kent Ridge Road, Singapore 119076, Republic of Singapore}
\email{mattws@nus.edu.sg}
\date{\today}
\begin{document}

\maketitle

\begin{abstract}
Following an idea of Choi, we obtain a decomposition theorem for $k$-positive linear maps from $M_m(\mathbb{C})$ to $M_n(\mathbb{C})$, where $2 \leq k < \min\{m, n\}$.
As a consequence, we give an affirmative answer to Kye's conjecture (also solved independently by Choi) that every 2-positive linear map from $M_3(\mathbb{C})$ to $M_3(\mathbb{C})$ is decomposable.
\end{abstract}

\keywords{Keywords: positive maps between low-dimensional matrix algebras, k-positivity, decomposability, Schmidt number, PPT bound entangled states.}

\section{Introduction}
Let $M_{n}(\mathbb{C})$ be the algebra of all $n\times n$ matrices over the complex field $\mathbb{C}$.
We say that a matrix $A$ in $M_{n}(\mathbb{C})$ is positive semi-definite, and write $A \ge 0$, if $A$ is hermitian and all eigenvalues of $A$ are non-negative.
Denote by $M_{n}^{+}(\mathbb{C})$ the set of all positive semi-definite matrices in $M_{n}(\mathbb{C})$,
and by $B(M_{m}(\mathbb{C}), M_{n}(\mathbb{C}))$ the space of all linear maps from $M_{m}(\mathbb{C})$ to $M_{n}(\mathbb{C})$.

\begin{defn}
  A linear map $\phi$ from $M_{m}(\mathbb{C})$ to $M_{n}(\mathbb{C})$ is called positive if $\phi(M_{m}^{+}(\mathbb{C}))\subseteq M_{n}^{+}(\mathbb{C})$.
\end{defn}
The identity map on $M_{n}(\mathbb{C})$ and the transpose map on $M_{n}(\mathbb{C})$ are denoted by $id_n$ and $\tau_n$ respectively.
\begin{defn}
  A map $\phi$ is called $k$-positive if the map $id_{k}\otimes \phi: M_{k}(M_m(\mathbb{C}))\rightarrow M_{k}(M_n(\mathbb{C}))$ is positive. Similarly, a map $\phi$ is called $k$-copositive if the map $\tau_{k}\otimes \phi:M_{k}(M_m(\mathbb{C}))\rightarrow M_{k}(M_n(\mathbb{C}))$ is positive.
\end{defn}
If a map is $k$-positive (resp. $k$-copositive) for every $k$, it is called completely positive (resp. completely copositive). A positive map is called decomposable
if it can be written as the sum of a completely positive map and a completely copositive map.

In \cite{gchoi}, Cho, Kye and Lee introduced the generalized Choi maps and discussed the conditions for the maps to be $k$-positive or decomposable. For generalized Choi's map in $B(M_{3}(\mathbb{C}),M_3(\mathbb{C}))$, they showed that 2-positivity or 2-copositivity implies decomposability. It is natural to ask whether this property holds for every 2-positive or 2-copositive map in $B(M_{3}(\mathbb{C}),M_3(\mathbb{C}))$
(see \cite[page 1330002-11]{facial}).

\begin{coj}
Every 2-positive (respectively 2-copositive) linear map in $B(M_{3}(\mathbb{C}),M_3(\mathbb{C}))$ is decomposable.
\end{coj}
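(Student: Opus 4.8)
The plan is to settle the ``respectively 2-copositive'' clause by a transpose reduction and then concentrate on 2-positive maps, which I would analyze through their Choi matrices. For the reduction, note that $\tau_2\otimes\tau_3=\tau_6$ preserves positivity on $M_2\otimes M_3$, so $\tau_2\otimes\phi\ge 0$ if and only if $id_2\otimes(\tau_3\circ\phi)\ge 0$; hence $\phi$ is 2-copositive exactly when $\tau_3\circ\phi$ is 2-positive, and since composing with $\tau_3$ interchanges the completely positive and completely copositive summands, $\phi$ is decomposable iff $\tau_3\circ\phi$ is. Thus it suffices to treat a 2-positive $\phi$. I would then pass to the Choi matrix $C_\phi=\sum_{i,j}E_{ij}\otimes\phi(E_{ij})\in M_3\otimes M_3$ and use three dictionary entries: $\phi$ is completely positive iff $C_\phi\ge 0$; $\phi$ is completely copositive iff the partial transpose $C_\phi^{\Gamma}$ in the first tensor factor is positive, so decomposability means $C_\phi=P+Q^{\Gamma}$ for some $P,Q\ge 0$; and, by Choi's characterization of $k$-positivity, $\phi$ is 2-positive iff $\phi\circ\mathrm{Ad}_V$ is completely positive for every $V\in M_{3,2}$, equivalently every $2$-dimensional compression of $C_\phi$ is positive semidefinite.

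The engine is the general decomposition theorem for $k$-positive maps with $2\le k<\min\{m,n\}$. Here the strict inequality is the crucial hypothesis: by Choi's theorem a $\min\{m,n\}$-positive map is already completely positive, so a map that is only $k$-positive with $k=\min\{m,n\}-1$ is exactly ``one step short'' of complete positivity, and the obstruction to $C_\phi\ge 0$ is detected solely by vectors of full Schmidt rank. Following Choi, I would use the compression characterization to peel off an explicit completely positive summand assembled from the positive $2$-block compressions of $C_\phi$, arranged so that the residual map's Choi matrix has its non-positive part supported on the full-Schmidt-rank directions. The decomposition theorem should then present $\phi$ as a completely positive map plus a completely copositive map plus a ``core'' whose Choi matrix carries sharp rank constraints; in high dimensions this core need not be decomposable (consistent with the existence of higher Schmidt-number PPT states), so the theorem is genuinely structural rather than a blanket decomposability statement.

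Specializing to $m=n=3$ and $k=2$, full Schmidt rank is only $3$, and I expect a dimension count to force the core either to factor through a $2$-dimensional space in the domain or range, or to have a Choi matrix whose negative eigenspace is the partial transpose of a positive semidefinite matrix. In the first case the core falls under the classical fact that every positive map on a system with $mn\le 6$ is decomposable (St\o rmer, Woronowicz); in the second case the core is directly completely copositive. Reassembling the completely positive and completely copositive pieces extracted along the way then exhibits $\phi$ as a sum of a completely positive and a completely copositive map, and the transpose reduction of the first paragraph settles the 2-copositive case simultaneously.

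I expect the main obstacle to be the core of the decomposition theorem itself: proving that, after normalizing $\phi$ by congruences on both sides (which preserve 2-positivity and decomposability) so that the diagonal blocks of $C_\phi$ are in canonical form, the full-Schmidt-rank defect of a $2$-block-positive matrix in $M_3\otimes M_3$ is exactly the partial transpose of a positive semidefinite matrix. Concretely one must produce $P\ge 0$ with $P^{\Gamma}\le C_\phi^{\Gamma}$ whose range absorbs the entire non-positive part of $C_\phi$, and then verify that the remainder stays \emph{completely} copositive rather than merely block-positive; the rank constraints peculiar to the $3\times 3$ case are what make this solvable, and controlling them is the delicate point. Pinning this down is most naturally done by first treating the extremal $2$-positive maps, whose Choi matrices carry the tightest rank conditions in the spirit of Choi's analysis of extreme completely positive maps, and showing that each such extremal map already lies in the decomposable cone.
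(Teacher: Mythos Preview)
Your high-level strategy --- peel off a completely positive summand from a 2-positive $\phi$, reduce what remains to a lower-dimensional system, and invoke the St\o rmer--Woronowicz decomposability of positive maps on $M_2\otimes M_3$ --- is exactly the paper's route. The transpose reduction for the 2-copositive case is also correct and matches the paper's Corollary~2.10.

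Where your proposal diverges is in the description of the ``engine'' and the endgame. You expect the general decomposition theorem to produce $\phi=\text{CP}+\text{coCP}+\text{core}$, with the core carrying rank constraints that then require a dimension count, a case split (factoring through a $2$-dimensional space \emph{versus} a negative eigenspace that is a partial transpose of something positive), and possibly an extremal-ray analysis. None of this is needed, and the second branch of your case split is not obviously tractable. The paper's Theorem~2.7 is much sharper: given $\phi$ that is $k$-positive with $k\ge 2$, pick a diagonal block $A_{pp}=\phi(E_{pp})\neq 0$ of the Choi matrix $C_\phi$ and form the Schur-complement-type splitting
\[
C_\phi = U + R,\qquad U_{ij}=A_{ip}A_{pp}^{\dagger}A_{pj},\qquad R_{ij}=A_{ij}-A_{ip}A_{pp}^{\dagger}A_{pj}.
\]
Then $U\ge 0$ (so the associated $\psi$ is completely positive), and the $k$-positivity of $\phi$ applied to vectors with $e_p$ in one slot forces both that the $p$-th block row and column of $R$ vanish and that the compressed map $\kappa\in B(M_{m-1}(\mathbb{C}),M_n(\mathbb{C}))$ determined by the remaining blocks of $R$ is $(k-1)$-positive. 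Thus the remainder $\gamma$ is \emph{automatically} the $p$-trivial lifting of a positive map from $M_{m-1}$; no dimension count, no second case, no extremal analysis.

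For $m=n=3$, $k=2$ this gives $\phi=\psi+\tilde\kappa_p$ with $\psi$ completely positive and $\kappa\in B(M_2(\mathbb{C}),M_3(\mathbb{C}))$ positive; Woronowicz makes $\kappa$ decomposable, trivial lifting preserves decomposability, and you are done. Your proposal would eventually land here via its ``first case'', but the machinery you plan to build around it (the coCP summand in the decomposition theorem, the full-Schmidt-rank defect analysis, the extremal 2-positive maps) is both unnecessary and, as stated, not a proof: you have not shown that the defect is a partial transpose of something positive, nor that the extremal analysis closes. The missing concrete idea is precisely the Schur complement peel-off, which does all the work in one stroke.
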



Let us recall some useful definitions.
Denote by $\mathcal{H}_A$ and $\mathcal{H}_B$ two Hilbert spaces with $dim(\mathcal{H}_A)=m$ and $dim(\mathcal{H}_B)=n$, respectively.
\begin{defn}
  Every vector $z \in \mathcal{H}_A \otimes \mathcal{H}_B$ has a canonical expansion $z = \sum_{i=1}^{m} e_i \otimes z_i$,
  where $\{e_i\}_{i=1}^{m}$ is a basis for $\mathcal{H}_A$ and $z_i \in \mathcal{H}_B$ for $i=1,2,...,m$.
  The {\em Schmidt rank} $SR(z)$ of the vector $z$ is defined to be the dimension of $span(\{z_1,...,z_m\})$.
\end{defn}

\begin{defn} \cite{TH, 3by3}
Consider the density matrix $\rho$ for a quantum state in a bipartite system $\mathcal{H}_A\otimes\mathcal{H}_B$.
The {\em Schmidt number} of the density matrix (or the state) $\rho$ is defined by
$$ SN(\rho)=\min\limits_{} \bigg\{\max\limits_{k} SR(z_k)\bigg\}, $$
where the minimum is taken over all possible decompositions
$$\rho=\sum_{k}^{}p_k\cdot z_k z_k^*$$
with $z_k$ being vectors in $\mathcal{H}_A\otimes\mathcal{H}_B$ and $p_{k}> 0, \ \sum_{k} {p_{k}} = 1$.
\end{defn}

Sanpera, Bru{\ss} and Lewenstein in \cite{3by3} formulated the following conjecture and presented strong evidence of its validity for some special cases.

\begin{coj}
  All bound entangled states with positive partial transpose in $\mathbb{C}^3\otimes\mathbb{C}^3$ have Schmidt number 2.
\end{coj}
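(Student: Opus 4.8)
The plan is to derive this conjecture as a formal consequence of the decomposability theorem announced in the abstract, namely that every $2$-positive map in $B(M_3(\mathbb{C}),M_3(\mathbb{C}))$ is decomposable. The bridge is the duality, via the Choi--Jamio{\l}kowski correspondence, between positivity properties of states and positivity properties of maps, and I would use two instances of it. First, the Terhal--Horodecki characterization of Schmidt number \cite{TH}: a state $\rho$ on $\mathbb{C}^3\otimes\mathbb{C}^3$ satisfies $SN(\rho)\le k$ if and only if $(id_3\otimes\phi)(\rho)\ge 0$ for every $k$-positive map $\phi\in B(M_3(\mathbb{C}),M_3(\mathbb{C}))$, with $id_3$ acting on the first tensor factor. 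Second, the elementary duality between PPT states and decomposable maps: $\rho$ has positive partial transpose if and only if $(id_3\otimes\phi)(\rho)\ge 0$ for every decomposable $\phi$. The ``only if'' direction is immediate, since for $\phi=\phi_1+\phi_2\circ\tau_3$ with $\phi_1,\phi_2$ completely positive one has $(id_3\otimes\phi_1)(\rho)\ge 0$ because $\rho\ge 0$, while $(id_3\otimes(\phi_2\circ\tau_3))(\rho)=(id_3\otimes\phi_2)(\rho^{T_B})\ge 0$ because $\rho^{T_B}\ge 0$.

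First I would record the cone containment supplied by the main theorem: every $2$-positive map in $B(M_3(\mathbb{C}),M_3(\mathbb{C}))$ is decomposable, so the cone of $2$-positive maps lies inside the cone of decomposable maps. Chaining the two dualities through this containment then gives the conclusion directly. Let $\rho$ be a PPT state in $\mathbb{C}^3\otimes\mathbb{C}^3$. By the second duality, $(id_3\otimes\phi)(\rho)\ge 0$ for every decomposable $\phi$; since every $2$-positive map is decomposable, in particular $(id_3\otimes\phi)(\rho)\ge 0$ holds for every $2$-positive $\phi$; and then the first duality, with $k=2$, yields $SN(\rho)\le 2$.

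To complete the argument I would note that a bound entangled state is in particular entangled, hence not separable, and separable states are precisely those of Schmidt number $1$; therefore $SN(\rho)\ge 2$. Together with $SN(\rho)\le 2$ this forces $SN(\rho)=2$, which is the assertion of the conjecture.

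The hard part is not this deduction, which is purely formal once the ingredients are assembled, but the decomposability theorem it rests on; establishing the cone containment of $2$-positive inside decomposable maps for $B(M_3(\mathbb{C}),M_3(\mathbb{C}))$ is the genuine content of the paper. Within the reduction itself, the point requiring care is the Terhal--Horodecki duality, as it is the precise translation of the analytic statement about $2$-positive maps into the geometric statement about Schmidt number; in particular I would have to keep the tensor-factor conventions in the two dualities consistent, since that is where a side error would most easily slip in.
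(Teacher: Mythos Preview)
Your proposal is correct and follows essentially the same route as the paper: once Theorem~3.2 ($\mathbb{P}_2\subset\mathbb{D}$) is in hand, the paper dualizes via the bilinear pairing $\langle A,\phi\rangle$ to obtain $\mathbb{T}=\mathbb{D}^{\circ}\subset\mathbb{P}_2^{\circ}=\mathbb{V}_2$ (Corollary~3.6 and Remark~3.7), which together with entanglement forcing $SN(\rho)\ge 2$ gives the conjecture. You phrase the same duality through the equivalent Horodecki--Terhal criterion $(id_3\otimes\phi)(\rho)\ge 0$ rather than through the scalar pairing, but the logical content is identical.
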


There is a diagram of dual cone relations between quantum states and positive maps (see \cite{scone,itoh,mk,facial}). Let us consider the duality between the space $M_m(\mathbb{C})\otimes M_n(\mathbb{C})$ and the space $B(M_{m}(\mathbb{C}),M_n(\mathbb{C}))$. Let $E_{ij}$ be the canonical matrix units in $M_m(\mathbb{C})$. For $A=\sum_{i,j=1}^{m}E_{ij}\otimes A_{ij}\in M_m(\mathbb{C})\otimes M_n(\mathbb{C})$ and a linear map $\phi\in B(M_{m}(\mathbb{C}),M_n(\mathbb{C}))$, define a bilinear form:
$$ \langle A,\phi\rangle \, = \, \sum_{i,j=1}^{m}Tr(\phi(E_{ij})A_{ij}^t).$$

Note that for two normed real spaces $X$ and $Y$ which are dual to each other with respect to a bilinear form $\langle\cdot,\cdot\rangle$,
the dual cone for a subset $C$ of $X$ is defined as $C^{\circ}=\{y\in Y:\quad \langle x,y\rangle\geq 0 \text{ for each }x\in C\}$. Denote by $\mathbb{P}_k[m,n]$ and $\mathbb{P}^k[m,n]$ the set of all $k$-positive maps and the set of all $k$-copositive maps in $B(M_{m}(\mathbb{C}),M_n(\mathbb{C}))$, respectively. Define convex cones $\mathbb{V}_k[m,n]$ and $\mathbb{V}^k[m,n]$ in $M_m(\mathbb{C})\otimes M_n(\mathbb{C})$ as
\begin{align*}
  \mathbb{V}_k[m,n]&=\{zz^*: \; SR(z)\leq k, \; z \text{ in } \mathbb{C}^m\otimes\mathbb{C}^n\}^{\circ\circ} ,\\
  \mathbb{V}^k[m,n]&=\{(zz^*)^{\Gamma}: \; SR(z) \leq k, \; z \text{ in } \mathbb{C}^m\otimes\mathbb{C}^n\}^{\circ\circ}.
\end{align*}

Here $\Gamma$ is an operation called partial transposition that acts as transposition only on the first part of a tensor product.
By the dual correspondence between maps and states, we have the following diagram:
$$\begin{array}{cccccccl}
\mathbb{V}_1 & \subsetneqq  & \cdots &  \mathbb{V}_k & \subsetneqq &\mathbb{V}_{m\wedge n} & = &(M_m(\mathbb{C})\otimes M_n(\mathbb{C}))^+\\
\updownarrow &              &         &   \updownarrow &            & \updownarrow                                                        \\
\mathbb{P}_1 & \supsetneqq  & \cdots &  \mathbb{P}_k   &\supsetneqq & \mathbb{P}_{m\wedge n}& \cong & (M_m(\mathbb{C})\otimes M_n(\mathbb{C}))^+\\
\end{array},$$
where $m\wedge n \, = \, \min\{m,n\}$. A similar diagram holds in case of copositivity:
$$\begin{array}{cccccccl}
\mathbb{V}^1 & \subsetneqq  & \cdots &  \mathbb{V}^k & \subsetneqq &\mathbb{V}^{m\wedge n} &     &                                   \\
\updownarrow &              &          & \updownarrow &            & \updownarrow                                                     \\
\mathbb{P}^1 & \supsetneqq  & \cdots &  \mathbb{P}^k &\supsetneqq & \mathbb{P}^{m\wedge n} &\cong& (M_m(\mathbb{C})\otimes M_n(\mathbb{C}))^+\\
\end{array}.$$
Denote by $\mathbb{D}[m,n]$ the convex cone given by $\mathbb{P}_{m\wedge n}+\mathbb{P}^{m\wedge n}$.
Correspondingly, denote by $\mathbb{T}[m,n]$ the cone of states given by $\mathbb{V}_{m\wedge n}\bigcap \mathbb{V}^{m\wedge n}$.

One can check that $(\mathbb{D}[m,n],\mathbb{T}[m,n])$ is a dual pair defined through the bilinear pairing between $B(M_{m}(\mathbb{C}),M_n(\mathbb{C}))$ and $M_m(\mathbb{C})\otimes M_n(\mathbb{C})$. It is natural to ask where should we locate the pair $(\mathbb{D}[m,n],\mathbb{T}[m,n])$ in the above two diagrams.
Moreover, it follows from duality in the diagrams that Conjecture 1.1 and Conjecture 1.2 are equivalent.


This paper is organized as follows.  In Section~2, we will give a decomposition theorem in order to relate a $k$-positive map in $B(M_m(\mathbb{C}),M_n(\mathbb{C}))$ to a $(k-1)$-positive map which actually resides in $B(M_{m-1}(\mathbb{C}),M_n(\mathbb{C}))$.
In Section~3, we will give an affirmative answer to Conjecture 1.1 and hence Conjecture 1.2 too (see Theorem 3.2 and Corollary 3.6).
Examples are provided throughout to illustrate certain aspects of the results obtained in these sections.

\section{A decomposition for all $k$-positive / $k$-copositive maps}
 Our approach towards Conjecture~1.1 is to peel off a completely positive map from a 2-positive map. That is, find a completely positive map which is dominated by the 2-positive map. Moreover, the dimension of the space where the remaining map resides is reduced. Indeed, this is a dimension-lowering trick.
 For a positive linear map on matrix algebras, a classical theorem by Choi \cite{choi} is important in determining complete positivity.
 Before that, we recall the notion of the Choi matrix for a linear map.
\begin{defn}
  Let $B(K)$ and $B(H)$ denote the space of bounded linear operators on finite dimensional Hilbert spaces $K$ and $H$, respectively.
  Let $E_{ij},i,j=1,...,m$, be the canonical matrix units for $B(K)$ and $(dim(K),dim(H))=(m,n)$. Given a linear map $\phi\in B(B(K),B(H))$, the Choi matrix $C_{\phi}$ for $\phi$ is:
$$C_{\phi}\triangleq \sum_{i,j=1}^{m}E_{ij}\otimes \phi(E_{ij})=[\phi(E_{ij})]_{i,j=1}^{m}\in M_{m}(M_{n}(\mathbb{C})).$$
\end{defn}
\begin{rem}
  Obviously the map $\phi\longmapsto C_{\phi}$ is a bijection between linear maps in $B(M_{m}(\mathbb{C}),M_{n}(\mathbb{C}))$ and matrices in $M_{m}(M_n(\mathbb{C}))$ which preserves linearity.
\end{rem}

\begin{thm}[Choi,1975]
  A positive map $\phi\in B(B(K),B(H))$ is completely positive if and only if
  the corresponding Choi matrix is positive.
\end{thm}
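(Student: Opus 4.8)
The plan is to prove the two implications separately, handling the forward direction directly from the definitions and reserving the bulk of the work for the converse.

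For the forward direction, suppose $\phi$ is completely positive. Then in particular $\phi$ is $m$-positive, so $id_m\otimes\phi$ is a positive map. The key observation is that $C_\phi$ is the image under $id_m\otimes\phi$ of a positive matrix: writing $v=\sum_{i=1}^{m}e_i\otimes e_i\in K\otimes K$, one checks directly that $vv^*=\sum_{i,j=1}^{m}E_{ij}\otimes E_{ij}$, and therefore $(id_m\otimes\phi)(vv^*)=\sum_{i,j}E_{ij}\otimes\phi(E_{ij})=C_\phi$. Since $vv^*\ge 0$ and $id_m\otimes\phi$ is positive, we get $C_\phi\ge 0$.

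For the converse, suppose $C_\phi\ge 0$. The strategy is to extract from $C_\phi$ a family of ``Kraus operators'' that exhibit $\phi$ in a manifestly completely positive form. Since $C_\phi$ is a positive matrix on $K\otimes H\cong\mathbb{C}^{mn}$, I would first write it as a finite sum of rank-one positives, $C_\phi=\sum_s w_s w_s^*$ (for instance via its spectral decomposition, absorbing the eigenvalues into the eigenvectors). Each $w_s\in K\otimes H$ admits an expansion $w_s=\sum_{i=1}^{m}e_i\otimes v_{s,i}$ with $v_{s,i}\in H$, and I then define the operator $V_s\colon K\to H$ by $V_s=\sum_i v_{s,i}e_i^*$, i.e. the operator whose $i$-th column is $v_{s,i}$. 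The heart of the argument is the verification that $\phi(X)=\sum_s V_s X V_s^*$ for every $X\in M_m(\mathbb{C})$. By linearity it suffices to check this on the matrix units $E_{ij}$. A direct computation using $e_k^* E_{ij}e_l=\delta_{ki}\delta_{lj}$ gives $V_s E_{ij}V_s^*=v_{s,i}v_{s,j}^*$, while expanding $C_\phi=\sum_s w_s w_s^*$ blockwise yields $C_\phi=\sum_{i,j}E_{ij}\otimes\big(\sum_s v_{s,i}v_{s,j}^*\big)$. Comparing this with the defining expression $C_\phi=\sum_{i,j}E_{ij}\otimes\phi(E_{ij})$ and invoking the bijectivity of $\phi\mapsto C_\phi$ (the Remark above), I conclude $\phi(E_{ij})=\sum_s v_{s,i}v_{s,j}^*=\sum_s V_s E_{ij}V_s^*$, hence $\phi(X)=\sum_s V_s X V_s^*$ in general. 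Complete positivity is then immediate: for any $k$ and any $Y\ge 0$ in $M_k(M_m(\mathbb{C}))$, we have $(id_k\otimes\phi)(Y)=\sum_s(I_k\otimes V_s)\,Y\,(I_k\otimes V_s)^*\ge 0$, each summand being a congruence $WYW^*$ of a positive matrix.

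I expect the main obstacle to be bookkeeping rather than conceptual: correctly reshaping the eigenvectors $w_s$ of $C_\phi$ into the operators $V_s$ and tracking the index conventions so that the block-by-block identity $\phi(E_{ij})=\sum_s V_s E_{ij}V_s^*$ comes out exactly, with no stray transposition or conjugation. Once the Kraus representation $\phi=\sum_s V_s(\cdot)V_s^*$ is in hand, the completely positive conclusion needs only the elementary fact that congruences preserve positivity, so the real content lies entirely in reconstructing the $V_s$ from the positivity of the single matrix $C_\phi$.
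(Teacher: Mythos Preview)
Your proof is correct and is in fact the standard argument for Choi's theorem. However, the paper does not supply its own proof of this statement: it is recorded as a classical result of Choi (1975) with a citation to \cite{choi}, and is used as a black box in the subsequent development. So there is nothing in the paper to compare your argument against; your write-up simply fills in a proof that the authors chose to omit.
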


The peel-off theorem first appeared in \cite{peel} (see also St{\o}rmer's book \cite[pages 38-39]{sbook}). Combined with Zorn's Lemma St{\o}rmer obtained a decomposition for positive maps in \cite{s2}.
Here we present a slightly stronger version (Theorem~2.7) of the peel-off result by block-matrix approach, which was shown by M.-D. Choi for the case of 2-positive maps \cite{choi2}.
Let us consider $k$-positive maps for the moment. A similar theorem holds for $k$-copositive maps.

\begin{defn}[Trivial Lifting]
  Given a linear map $\chi\in B(M_{s}(\mathbb{C}),M_n(\mathbb{C}))$, fix the canonical matrix unit basis $E_{ij}$, $i,j=1,..,s$, in $M_{s}(\mathbb{C})$, under which the Choi matrix is $C_{\chi}=[\chi(E_{ij})]_{i,j=1}^{s}\in M_{s}(M_n(\mathbb{C}))$. Given $I=\{n_1,...,n_p\}\subset\{1,...,s+p\}$, where $n_1<\cdots<n_p$,  extend the matrix $C_{\chi}$ to a $(s+p)\times (s+p)$ block matrix $C^{lift}_{I}\in M_{s+p}(M_n(\mathbb{C}))$ by adding one row and one column of $n\times n$ zero matrices at the $n_k^{th}$ level for each $k=1,...,p$ as follows:

\begin{align*}
	C^{lift}_{I}&\triangleq
	\bordermatrix{     & 1^{st}  & \cdots & n_k^{th} & \cdots & (s+p)^{th} \cr
                     1^{st} & \chi(E_{11}) & \cdots & 0 & \cdots & \chi(E_{1,s})\cr
                \vdots & \vdots & \ddots & 0 & \ddots & \vdots \cr
                     n_k^{th} & 0 & 0 & 0 & 0 & 0\cr
                \vdots & \vdots & \ddots & 0 & \ddots &  \vdots\cr                    (s+p)^{th} & \chi(E_{s,1}) & \cdots & 0 & \cdots & \chi(E_{s,s})  \cr
            }.
\end{align*}
Denote by $\tilde{\chi}_{I}$ the map in $B(M_{s+p}(\mathbb{C}),M_n(\mathbb{C}))$ associated with the Choi matrix $C_{\tilde{\chi}_I}=[\tilde{\chi}_{p}(E_{ij})]_{i,j=1}^{s+p}=C^{lift}_{I}$. Then the map $\tilde{\chi}_{I}$ is called a I-trivial lifting of the original map $\chi$. If $I=\{q\}$ is a singleton, simply denote by $\tilde{\chi}_q$ the $q$-trivial lifting of $\chi$.
\end{defn}

\begin{lem}
  The map $\chi$ is k-positive or k-copositive if and only if the trivial lifting $\tilde{\chi}_p$ is k-positive or k-copositive, respectively.
\end{lem}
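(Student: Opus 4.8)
The plan is to exhibit the trivial lifting as a two-sided composition of $\chi$ with completely positive ``conjugation'' maps, so that both directions of the equivalence (and the copositive analogue) reduce to the fact that $k$-positivity and $k$-copositivity are stable under composing with completely positive maps.

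Concretely, write $J=\{1,\dots,s+p\}\setminus I=\{j_1<\cdots<j_s\}$ for the surviving levels, and let $V\colon\mathbb{C}^{s}\to\mathbb{C}^{s+p}$ be the coordinate isometry with $Ve_a=e_{j_a}$, so that $V^*V=I_s$. I would introduce the two conjugations $\Psi(W)=V^*WV$ from $M_{s+p}(\mathbb{C})$ to $M_s(\mathbb{C})$ (compression to the rows and columns indexed by $J$) and $\Theta(U)=VUV^*$ from $M_s(\mathbb{C})$ to $M_{s+p}(\mathbb{C})$ (embedding into the $J$-block). Each is completely positive: for every $k$, $(id_k\otimes\Psi)(X)=(I_k\otimes V)^*X(I_k\otimes V)$ and $(id_k\otimes\Theta)(X)=(I_k\otimes V)X(I_k\otimes V)^*$ are positive whenever $X$ is.

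The key algebraic step is to verify
$$\tilde{\chi}_I=\chi\circ\Psi,\qquad \chi=\tilde{\chi}_I\circ\Theta.$$
For the first, I would compare Choi matrices: since $V^*E_{ij}^{(s+p)}V=E_{ab}^{(s)}$ when $(i,j)=(j_a,j_b)$ and $0$ otherwise, the Choi matrix of $\chi\circ\Psi$ is $\sum_{a,b}E_{j_a j_b}^{(s+p)}\otimes\chi(E_{ab}^{(s)})$, which is exactly the lifted matrix $C^{lift}_I$; by the linear bijection between maps and their Choi matrices this gives $\tilde{\chi}_I=\chi\circ\Psi$. The second identity is immediate from $V^*V=I_s$, since $\tilde{\chi}_I(\Theta(U))=\chi(V^*VUV^*V)=\chi(U)$. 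With these in hand, I would invoke $id_k\otimes(\chi\circ\Xi)=(id_k\otimes\chi)\circ(id_k\otimes\Xi)$ and, for the copositive case, $\tau_k\otimes(\chi\circ\Xi)=(\tau_k\otimes\chi)\circ(id_k\otimes\Xi)$, valid for any linear $\Xi$. If $\chi$ is $k$-positive then $id_k\otimes\tilde{\chi}_I=(id_k\otimes\chi)\circ(id_k\otimes\Psi)$ is a composition of positive maps, hence positive, so $\tilde{\chi}_I$ is $k$-positive; the converse is the same argument through $\chi=\tilde{\chi}_I\circ\Theta$. The copositive equivalence is verbatim the same, the point being that the completely positive factor always enters as $id_k\otimes\Psi$ (resp. $id_k\otimes\Theta$), which is positive, while the transpose $\tau_k$ remains attached to the $\chi$/$\tilde{\chi}_I$ factor.

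I expect the only delicate point---and the one I would write out carefully---to be the first composition identity, i.e. recognizing the trivial lifting precisely as pre-composition of $\chi$ with the compression $\Psi$ onto the $J$-block; everything else is formal. Once that is pinned down, no Schmidt-rank or eigenvalue analysis is needed. An alternative route avoids the conjugations altogether and argues through the block-positivity criterion ``$\phi$ is $k$-positive iff $\langle\xi,C_\phi\xi\rangle\ge0$ for all $\xi\in\mathbb{C}^{s+p}\otimes\mathbb{C}^n$ with $SR(\xi)\le k$'' coming from the dual-cone correspondence of Section~1, checking only that restricting to the $J$-coordinates via $V^*\otimes I_n$ never raises Schmidt rank; I find the composition argument cleaner.
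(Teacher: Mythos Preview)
Your proposal is correct. The route differs from the paper's: the paper argues directly on rank-one matrices, fixing $\eta=(w^1,\ldots,w^k)^T\in\mathbb{C}^k\otimes\mathbb{C}^m$, deleting the $p$-th coordinate of each $w^s$ to form $\hat\eta$, and verifying the single matrix identity $(id_k\otimes\tilde\chi_p)(\eta\eta^*)=(id_k\otimes\chi)(\hat\eta\hat\eta^*)$ (and its transpose analogue). Your argument repackages exactly this computation structurally, since your $\hat\eta$ is $(I_k\otimes V^*)\eta$ and the paper's identity is just your composition $\tilde\chi_I=\chi\circ\Psi$ evaluated at $\eta\eta^*$. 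What your framing buys is that the general $I$ case falls out at once (the paper states only the singleton $p$ case and then iterates in Remark~2.6), and the copositive direction is truly verbatim rather than a parallel computation; what the paper's framing buys is that no appeal to stability of $k$-positivity under composition with completely positive maps is needed---everything is a two-line explicit identity.
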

\begin{proof}
  Let $\eta=(w^1,...,w^{k})^{T}$ be an arbitrary vector in $\mathbb{C}^k\otimes \mathbb{C}^m$ where $w^s\in\mathbb{C}^m,\ s=1,...,k$. Let $\hat{w}^s\in \mathbb{C}^{m-1}$ be defined as $(w_{1}^s,...,w_{p-1}^s,w_{p+1}^s,...,w_m^s)^{T}$ for $s=1,...,k$, and $\hat{\eta}=(\hat{w}^1,...,\hat{w}^k)\in\mathbb{C}^k\otimes\mathbb{C}^{m-1}$. By definition of p-trivial lifting, $$(id_k\otimes \tilde{\chi}_p)(\eta\eta^*)=[\tilde{\chi}_p(w^s(w^t)^*)]_{s,t=1}^k=[\chi(\hat{w}^s(\hat{w}^t)^*)]_{s,t=1}^k=(id_k\otimes\chi)(\hat{\eta}\hat{\eta}^*).$$ This matrix equality in $M_{k}(M_{n}(\mathbb{C}))$ shows that the pair of maps $(\chi,\tilde{\chi}_p)$ are $k$-positive simultaneously. For $k$-copositivity, we also have: $$(\tau_k\otimes \tilde{\chi}_p)(\eta\eta^*)=[\tilde{\chi}_p(w^t(w^s)^*)]_{s,t=1}^k=[\chi(\hat{w}^t(\hat{w}^s)^*)]_{s,t=1}^k=(\tau_k\otimes \tilde{\chi})(\hat{\eta}\hat{\eta}^*),$$
  which completes the proof.
\end{proof}

\begin{rem}
  By repeatedly using Lemma 2.5, a map $\chi$ is k-positive or $k$-copositive, if and only if the trivial lifting $\tilde{\chi}_{I}$ is $k$-positive or $k$-copositive, respectively.
\end{rem}

\begin{thm}(Choi Decomposition)
  Let $\phi$ be a non-zero k-positive $(2 \leq k < \min\{m,n\})$ map in $B(M_m(\mathbb{C}),M_n(\mathbb{C}))$. Then there exists a decomposition $\phi=\psi+\gamma$, where $\psi$ is a non-zero completely positive map and $\gamma$ is a p-trivial lifting of a $(k-1)$-positive map in $B(M_{m-1}(\mathbb{C}),M_n(\mathbb{C}))$, for some $p\in\{1,...,m\}$.
\end{thm}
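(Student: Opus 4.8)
The plan is to produce the completely positive summand as a Schur-complement ``completion'' of a single block row and column of the Choi matrix of $\phi$, and then to recognise the remainder as a trivial lifting. Throughout I would use Choi's criterion (Theorem~2.3): a map is completely positive exactly when its Choi matrix is positive semi-definite. I would also use the standard reformulation of $k$-positivity in terms of Schmidt rank: writing $z=\sum_{a}e_a\otimes z_a\in\mathbb{C}^m\otimes\mathbb{C}^n$, a direct expansion of $v^*(id_k\otimes\phi)(\eta\eta^*)v$ shows that $\phi$ is $k$-positive if and only if $z^*C_\phi z\ge 0$ for every $z$ with $SR(z)\le k$; the same statement holds for a map in $B(M_{m-1}(\mathbb{C}),M_n(\mathbb{C}))$ with $m$ replaced by $m-1$.

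First I would locate a nonzero diagonal block. Since $\phi$ is positive and nonzero, the blocks $\phi(E_{ii})$ cannot all vanish: if they did, then $\phi(\mathbb{1})=\sum_i\phi(E_{ii})=0$ would force $\phi(X)=0$ for every $0\le X\le\mathbb{1}$, hence $\phi=0$. So fix $p$ with $d:=\phi(E_{pp})\neq 0$; after permuting the basis I may assume $p=m$ and write $C_\phi=\begin{pmatrix}A & b\\ b^* & d\end{pmatrix}$, where $A=[\phi(E_{ij})]_{i,j<m}$ and $b$ is the column of blocks $\phi(E_{im})$, $i<m$. Because $k\ge 2$, $\phi$ is $2$-positive, so applying $id_2\otimes\phi$ to the rank-one positive matrix in $M_2(M_m(\mathbb{C}))$ whose four blocks are $E_{mm},E_{mi},E_{im},E_{ii}$ gives $\begin{pmatrix}\phi(E_{mm}) & \phi(E_{mi})\\ \phi(E_{im}) & \phi(E_{ii})\end{pmatrix}\ge 0$; positivity of this $2\times2$ block forces $\mathrm{range}(\phi(E_{mi}))\subseteq\mathrm{range}(d)$, hence $\mathrm{range}(b^*)\subseteq\mathrm{range}(d)$. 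I then define $\psi$ through $C_\psi=\begin{pmatrix}bd^{+}b^{*} & b\\ b^* & d\end{pmatrix}$, where $d^{+}$ is the Moore--Penrose inverse. The range condition makes the Schur complement of $d$ equal to $bd^{+}b^{*}-bd^{+}b^{*}=0$, so $C_\psi\ge 0$ and $\psi$ is completely positive by Theorem~2.3; it is nonzero since $d\neq 0$. By construction $C_{\gamma}:=C_\phi-C_\psi$ has vanishing $m$-th block row and column, so $\gamma$ is exactly the $m$-trivial lifting of the map $\chi\in B(M_{m-1}(\mathbb{C}),M_n(\mathbb{C}))$ with Choi matrix $C_\chi=A-bd^{+}b^{*}$.

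It then remains to prove that $\chi$ is $(k-1)$-positive, which is the crux. I would take $\hat z=\sum_{i<m}e_i\otimes z_i$ with $SR(\hat z)\le k-1$, set $w=-d^{+}b^{*}\hat z$, and form $z=\hat z+e_m\otimes w\in\mathbb{C}^m\otimes\mathbb{C}^n$. A completion-of-squares computation, using $d^{+}dd^{+}=d^{+}$ together with the range condition, then yields $z^*C_\phi z=\hat z^*(A-bd^{+}b^{*})\hat z=\hat z^*C_\chi\hat z$. Crucially, adjoining the single tensor component $e_m\otimes w$ raises the Schmidt rank by at most one, so $SR(z)\le k$; the $k$-positivity of $\phi$ then gives $z^*C_\phi z\ge 0$, whence $\hat z^*C_\chi\hat z\ge 0$. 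As $\hat z$ ranged over all vectors of Schmidt rank at most $k-1$, the reformulation above shows $\chi$ is $(k-1)$-positive, completing $\phi=\psi+\gamma$.

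I expect the main obstacle to be precisely this last step. Subtracting a completely positive map generally destroys $k$-positivity, so the argument must exploit the very specific Schur-complement form of $\psi$. The delicate points are verifying the range condition that legitimises the pseudo-inverse, choosing $w=-d^{+}b^{*}\hat z$ so that the cross terms cancel exactly, and the Schmidt-rank bookkeeping showing that the one extra tensor component costs at most one unit of Schmidt rank---this is exactly what converts the hypothesis of $k$-positivity into the conclusion of $(k-1)$-positivity.
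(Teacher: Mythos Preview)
Your proof is correct and follows essentially the same route as the paper: both peel off the completely positive part $\psi$ as the Schur-complement completion of one diagonal block of the Choi matrix, use $2$-positivity to obtain the range condition that kills the $m$-th block row and column of the remainder, and then deduce $(k-1)$-positivity of the quotient from $k$-positivity of $\phi$. The only cosmetic difference is in this last step: the paper applies the block Schur-complement inequality (Lemma~2.8) directly to $(id_k\otimes\phi)(\xi\xi^*)$ with $\xi=(w^1,\dots,w^{k-1},e_m)^T$, whereas you phrase the same computation dually via the Schmidt-rank characterization $z^*C_\phi z\ge 0$ together with an explicit minimizing choice $w=-d^{+}b^{*}\hat z$ of the $e_m$-component---these are two formulations of the same completion-of-squares identity.
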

Before proving Theorem 2.7, recall a classical result (see \cite[Exercise 1.3.5]{prin}):
\begin{lem}
   Suppose a hermitian matrix M is partitioned as
    \begin{align*}
      M=\begin{pmatrix}A&B\\ B^*&C\end{pmatrix},
    \end{align*}
    where A and C are square matrices. Then the following conditions are equivalent:
\begin{enumerate}
	\item $M\geq 0$.
	\item $A\geq 0,\ M/A=C-B^*A^{\dagger}B\geq 0,\ range(B)\subseteq range(A)$.
	\item $C\geq 0,\ M/C=A-BC^{\dagger}B^*\geq 0,\ range(B^*)\subseteq range(C)$.
\end{enumerate}
     Here $A^{\dagger}$ and $C^{\dagger}$ refer to the Moore-Penrose pseudo inverses of A and C, respectively.
\end{lem}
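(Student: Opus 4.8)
The plan is to reduce everything to the classical block-diagonalization of a hermitian matrix by a congruence, taking care to replace the inverse of $A$ by its Moore--Penrose pseudo-inverse when $A$ is singular. Observe first that the matrix $M' = \begin{pmatrix} C & B^* \\ B & A\end{pmatrix}$ is obtained from $M$ by conjugating with the (unitary) block-swap permutation $P$, so $M \geq 0 \iff M' = PMP^* \geq 0$; moreover, condition (2) applied to $M'$ is literally condition (3) applied to $M$. Hence it suffices to establish the equivalence $(1)\Leftrightarrow(2)$, and $(1)\Leftrightarrow(3)$ follows by this symmetry.

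For $(1)\Rightarrow(2)$, I would first restrict the quadratic form $z\mapsto\langle Mz,z\rangle$ to vectors of the form $(x,0)$ to read off $A\geq 0$. The genuinely new ingredient relative to the invertible case is the range condition $\mathrm{range}(B)\subseteq\mathrm{range}(A)$. To obtain it, fix $x\in\ker A$ and test $M$ on $z=(x,sw)$ for a real scalar $s$ and arbitrary $w$; because $Ax=0$, the quadratic form collapses to $2s\,\mathrm{Re}\langle Bw,x\rangle + s^2\langle Cw,w\rangle\geq 0$ for all real $s$, which forces the linear coefficient to vanish. Repeating with $w$ replaced by $iw$ kills the imaginary part as well, so $\langle Bw,x\rangle=0$ for every $w$, i.e.\ $B^*x=0$. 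Thus $\ker A\subseteq\ker B^*$; taking orthogonal complements and using that $A$ is hermitian (so $\mathrm{range}(A)=(\ker A)^\perp$ and $\mathrm{range}(B)=(\ker B^*)^\perp$) gives exactly $\mathrm{range}(B)\subseteq\mathrm{range}(A)$.

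The core computation is then the congruence. Writing $L=\begin{pmatrix} I & 0 \\ -B^*A^\dagger & I\end{pmatrix}$ and recalling that $A^\dagger$ is hermitian and that $AA^\dagger$ and $A^\dagger A$ are the orthogonal projections onto $\mathrm{range}(A)$, the range condition yields the key identities $AA^\dagger B=B$ and $B^*A^\dagger A=B^*$. These make the off-diagonal blocks cancel, so that
\[ LML^* = \begin{pmatrix} A & 0 \\ 0 & C-B^*A^\dagger B\end{pmatrix} = \begin{pmatrix} A & 0 \\ 0 & M/A\end{pmatrix}. \]
Since $L$ is unipotent, hence invertible, congruence by $L$ preserves positive semidefiniteness in both directions. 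Therefore, once the range condition is in force, $M\geq 0$ is equivalent to $A\geq 0$ together with $M/A\geq 0$, completing $(1)\Rightarrow(2)$; and for $(2)\Rightarrow(1)$ the same identity gives $M=L^{-1}\,\mathrm{diag}(A,M/A)\,(L^{-1})^*\geq 0$.

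I expect the main obstacle to be the bookkeeping around the pseudo-inverse: unlike $A^{-1}$, the operator $A^\dagger$ only satisfies $AA^\dagger B=B$ after one knows $\mathrm{range}(B)\subseteq\mathrm{range}(A)$, so the range condition must be established first (in $(1)\Rightarrow(2)$) and then exploited (in both directions) to drive the congruence. Verifying the two identities $AA^\dagger B=B$ and $B^*A^\dagger A=B^*$ cleanly---via the projection interpretation of $AA^\dagger$ and the hermiticity of $A^\dagger$---is the step where the singular case genuinely departs from the textbook invertible-$A$ argument; everything else is the standard completion-of-square manipulation.
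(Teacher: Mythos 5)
Your proof is correct. Note that the paper does not actually prove this lemma---it is quoted as a classical fact with a citation to Bhatia, \emph{Positive Definite Matrices}, Exercise 1.3.5---so there is no in-paper argument to compare against; your write-up is the standard generalized Schur complement proof, and you handle the one genuinely delicate point correctly: establishing $\ker A\subseteq\ker B^*$ (hence $\mathrm{range}(B)\subseteq\mathrm{range}(A)$) \emph{before} invoking the identities $AA^{\dagger}B=B$ and $B^*A^{\dagger}A=B^*$ that make the congruence $LML^*=\mathrm{diag}(A,\,M/A)$ work, with the $(1)\Leftrightarrow(3)$ equivalence obtained for free by conjugating with the block-swap permutation.
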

\begin{rem}
\label{Rem2.9}
  Recall some properties of the Moore-Penrose pseudo inverse $A^{\dagger}$ of a matrix $A$ (see \cite[pages 29-30]{inverse} ):
  \begin{enumerate}[label=P\arabic*.]
    \item $AA^{\dagger}A=A$,\ $A^{\dagger}AA^{\dagger}=A^{\dagger}$.
    \item $(AA^{\dagger})^*=AA^{\dagger},\ (A^{\dagger}A)^*=A^{\dagger}A$.
    \item $AA^{\dagger}$ is the orthogonal projector onto the range of $A$,\ $A^{\dagger}A$ is the orthogonal projector onto the range of $A^*$.
    \item If $A$ is invertible, then $A^{\dagger}=A^{-1}$.
    \item If $A\geq0$, then $A^{\dagger}\geq0$.
    \end{enumerate}
\end{rem}
\begin{proof}(of Theorem 2.7)
  Since the $k$-positive map $\phi\neq0$, with respect to the canonical matrix units $E_{ij},\ i,j=1,...,m$, in $M_{m}(\mathbb{C})$, there exists an index $k\in \{1,2,...,m\}$ such that $\phi(E_{kk})\neq0$. Otherwise if $\phi(E_{kk})=0$ for every $k=1,..,m$, then $\phi(I_{m})=0$. Meanwhile for every $A\in M_{m}(\mathbb{C})^+$,\  $||A||I_{m}-A\geq0$ yields that $0=||A||\phi(I_{m})\geq\phi(A)$, implying $\phi=0$, which contradicts $\phi\neq0$. Without loss of generality, we assume that $\phi(E_{mm})\neq0$. Decompose the Choi matrix $C_{\phi}$ for $\phi$, with $A_{ij}=\phi(E_{ij}),\ i,j=1,...,m$, as follows:

\scalebox{0.8}{%
  \begin{minipage}{0.0\linewidth}
\begin{align*}
C_{\phi}&=
\left(
\begin{array}{ccccc}
A_{11} & \cdots &\cellcolor{blue!20}A_{1j} & \cdots & A_{1m} \\
\vdots & \ddots &\vdots & \ddots & \vdots \\
\cellcolor{blue!20}A_{i1} & \cdots \cellcolor{blue!20}&A_{ij} & \cdots & A_{im} \\
\vdots & \ddots &\vdots & \ddots & \vdots \\
A_{m1} & \cdots & \cellcolor{blue!20}A_{mj} & \cdots & A_{mm} \\
\end{array}
\right)\\
&=\left(
\begin{array}{ccccc}
A_{1m}A_{mm}^{\dagger}A_{m1} & \cdots & \cellcolor{blue!20}A_{1m}A_{mm}^{\dagger}A_{mj} & \cdots & A_{1m}A_{mm}^{\dagger}A_{mm} \\
\vdots & \ddots & \vdots & \ddots & \vdots \\
\cellcolor{blue!20}A_{im}A_{mm}^{\dagger}A_{m1} &\cdots & \cellcolor{blue!20} A_{im}A_{mm}^{\dagger}A_{mj} & \cdots & \cellcolor{blue!20}A_{im}A_{mm}^{\dagger}A_{mm} \\
\vdots & \ddots & \vdots & \ddots & \vdots \\
A_{mm}A_{mm}^{\dagger}A_{m1} &\ddots & \cellcolor{blue!20}A_{mm}A_{mm}^{\dagger}A_{mj} & \cdots & A_{mm}A_{mm}^{\dagger}A_{mm} \\
\end{array}
\right)\\
&+
\left(
\begin{array}{ccccc}
A_{11}-A_{1m}A_{mm}^{\dagger}A_{m1} & \cdots & \cellcolor{blue!20}A_{1j}-A_{1m}A_{mm}^{\dagger}A_{mj} & \cdots & A_{1m}-A_{1m}A_{mm}^{\dagger}A_{mm} \\
\vdots & \ddots & \vdots & \ddots & \vdots \\
\cellcolor{blue!20}A_{i1}-A_{im}A_{mm}^{\dagger}A_{m1} &\cdots & \cellcolor{blue!20} A_{ij}-A_{im}A_{mm}^{\dagger}A_{mj} & \cdots & \cellcolor{blue!20}A_{im}-A_{im}A_{mm}^{\dagger}A_{mm} \\
\vdots & \ddots & \vdots & \ddots & \vdots \\
A_{m1}-A_{mm}A_{mm}^{\dagger}A_{m1} &\ddots & \cellcolor{blue!20}A_{mj}-A_{mm}A_{mm}^{\dagger}A_{mj} & \cdots & A_{mm}-A_{mm}A_{mm}^{\dagger}A_{mm} \\
\end{array}
\right)
\\[3pt]
&\triangleq U+R=C_{\psi}+C_{\gamma}.
\end{align*}
   \end{minipage}}

For $i,j=1,...,m$, the $(i,j)$-entry of the matrix $U$ is given by $A_{im}A_{mm}^{\dagger}A_{mj}$, and the $(i,j)$-entry of the matrix $R$ is given by $R_{ij}=A_{ij}-A_{im}A_{mm}^{\dagger}A_{mj}$. Note that
\begin{align*}
U=\left(\begin{array}{c}
A_{1m}\\
\vdots\\
A_{im}\\
\vdots\\
A_{mm}
\end{array}
\right)A_{mm}^{\dagger}
\left(\begin{array}{ccccc}
A_{m1} & \cdots & A_{mj} & \cdots &A_{mm}
\end{array}
\right)\geq 0
\end{align*}
and $U\neq0$, since its $(m,m)$-entry is $A_{mm}A_{mm}^{\dagger}A_{mm}=A_{mm}=\phi(E_{mm})\neq0$. Then the map $\psi\neq0$ corresponding to the matrix $U$ is completely positive. By employing $k$-positivity of $\phi$, for arbitrary column vectors $w^1,w^2,..,w^{k-1}\in \mathbb{C}^m$, taking $\xi=(w^1,...,w^{k-1},e_m)^{T}$ where $e_{m}=(0,...,0,1)^{T}\in\mathbb{C}^m$,

\scalebox{0.9}{%
  \begin{minipage}{0.0\linewidth}
\begin{align*}
&\xi\xi^*=\left(
\begin{array}{ccccc}
w^1(w^1)^* &\cdots & w^1(w^j)^* & \cdots & w^1e_m^* \\
\vdots & \ddots & \vdots & \ddots & \vdots\\
w^i(w^1)^* &\cdots & w^i(w^j)^* & \cdots & w^ie_m^* \\
\vdots & \ddots & \vdots & \ddots & \vdots \\
e_m(w_1)^*&\cdots  & e_m(w^j)^* & \cdots & e_me_m^* \\
\end{array}\right)\geq0\\[2pt]
  &\Longrightarrow
  (id_k\otimes\phi(\xi\xi^*))=\left(
  \begin{array}{ccccc}
\phi(w^1(w^1)^*) &\cdots & \phi(w^1(w^j)^*) & \cdots & \phi(w^1e_m^*) \\
\vdots & \ddots & \vdots & \ddots & \vdots\\
\phi(w^i(w^1)^*) &\cdots & \phi(w^i(w^j)^*) & \cdots & \phi(w^ie_m^*) \\
\vdots & \ddots & \vdots & \ddots & \vdots \\
\phi(e_m(w_1)^*)&\cdots  & \phi(e_m(w^j)^*) & \cdots & \phi(e_me_m^*) \\
\end{array}\right)\geq0.
\end{align*}
   \end{minipage}
}
\\
By Lemma 2.8 (3), the condition $(id_k\otimes \phi)(\xi\xi^*)\geq0$ expands to:
\\

\scalebox{0.8}{%
  \begin{minipage}{0.0\linewidth}
\begin{align*}
&\ \ \ \ \ \ \  \left(
  \begin{array}{ccc}
\phi(w^1(w^1)^*) & \cdots & \phi(w^1(w^{k-1})^*) \\
\vdots & \ddots & \vdots \\
\phi(w^{k-1}(w^1)^*) & \cdots & \phi(w^{k-1}(w^{k-1})^*) \\
\end{array}\right)
\geq
\left(\begin{array}{c}
\phi(w^1e_m^*)\\
\vdots\\
\phi(w^{k-1}e_m^*)
\end{array}
\right)\phi(e_me_m^*)^{\dagger}\left(\begin{array}{ccc}\phi(e_m(w^1)^*)& \cdots & \phi(e_m(w^{k-1})^*)\end{array}\right)\\[2pt]
&\Longleftrightarrow
\left(
  \begin{array}{ccc}
\phi(w^1(w^1)^*)-\phi(w^1e_m^*)\phi(e_me_m^*)^{\dagger}\phi(e_m(w^1)^*) & \cdots & \phi(w^1(w^{k-1})^*)-\phi(w^1e_m^*)\phi(e_me_m^*)^{\dagger}\phi(e_m(w^{k-1})^*) \\
\vdots & \ddots & \vdots \\
\phi(w^{k-1}(w^1)^*)-\phi(w^{k-1}e_m^*)\phi(e_me_m^*)^{\dagger}\phi(e_m(w^1)^*) & \cdots &\phi(w^{k-1}(w^{k-1})^*)-\phi(w^{k-1}e_m^*)\phi(e_me_m^*)^{\dagger}\phi(e_m(w^{k-1})^*) \\
\end{array}\right)
\geq0.
\end{align*}
   \end{minipage}
}
\\[3pt]

For the $(s,t)$ entry in the above matrix, by linearity,
\begin{align*}
&\phi(w^se_m^*)\phi(e_me_m^*)^{\dagger}\phi(e_m(w^t)^*)\\
&=\phi\bigg(\sum_{i=1}^{m}w^s_ie_ie_m^*\bigg)\phi(e_me_m^*)^{\dagger}\phi\bigg(\sum_{j=1}^{m}\overline{w^t_j}e_me_j^*\bigg)\\
&=\bigg(\sum_{i=1}^{m}w^s_i\phi(E_{im})\bigg)\phi(E_{mm})^{\dagger}\bigg(\sum_{j=1}^{m}\overline{w^t_j}\phi(E_{mj})\bigg)\\
&=\sum_{i=1}^{m}\sum_{j=1}^{m}w^s_i\overline{w^t_j}\bigg(\phi(E_{im})\phi(E_{mm})^{\dagger}\phi(E_{mj})\bigg)\\
&=
\sum_{i=1}^{m}\sum_{j=1}^{m}w^s_i\overline{w^t_j}(A_{im}A_{mm}^{\dagger}A_{mj})\\
&=
\sum_{i=1}^{m}\sum_{j=1}^{m}w^s_i\overline{w^t_j}U_{ij}\\
&=
\sum_{i=1}^{m}\sum_{j=1}^{m}w^s_i\overline{w^t_j}\psi(e_ie_j^*)\\
&=
\psi(w^s(w^t)^*).
\end{align*}

Since $\gamma=\phi-\psi$, one has
\begin{align*}
\left(
  \begin{array}{ccc}
\gamma(w^1(w^1)^*)& \cdots & \gamma(w^1(w^{k-1})^*)\\
\vdots & \ddots & \vdots \\
\gamma(w^{k-1}(w^1)^*) & \cdots & \gamma(w^{k-1}(w^{k-1})^*)\\
\end{array}\right)
\geq0, \ \forall w^1,...,w^{k-1}\in \mathbb{C}^m,
\end{align*}
proving that $\gamma$ is $(k-1)$-positive. Moreover, all the entries of the $m^{th}$ row and $m^{th}$ column of the matrix $R$ are zero matrices. To show this, recall that $\phi$ is 2-positive$(k\geq2)$, hence any sub-block $\begin{pmatrix}\phi(E_{mm}) & \phi(E_{mj}) \\ \phi(E_{jm})& \phi(E_{jj})\end{pmatrix}\geq0$, for all $j=1,...,m-1$. By Lemma 2.8, one obtains that
$range(\phi(E_{mj}))\subseteq range(\phi(E_{mm}))$, for all $j=1,..,m$.
By property P3 in Remark~\ref{Rem2.9}, $A_{mm}A_{mm}^{\dagger}$ is the orthogonal projector onto the range of $A_{mm}$, so $R_{mj}=A_{mj}-A_{mm}A_{mm}^{\dagger}A_{mj}=0$, for all $j=1,...,m$. Denote the matrix $R=C_{\gamma}$ by:
\begin{align*}
R&=\left(
\begin{array}{cc|c}
&&0\\
\multicolumn{2}{c}{\raisebox{4pt}[2pt][0pt]{\Huge $K$}}&\vdots\\
0&\cdots&0\\
\end{array}
\right)=
\left(
\begin{array}{cc|c}
&&0\\
\multicolumn{2}{c}{\raisebox{4pt}[2pt][0pt]{\Huge $C_{\kappa}$}}&\vdots\\
0&\cdots&0\\
\end{array}
\right).
\end{align*}
Here, the map $\kappa\in B(M_{m-1}(\mathbb{C}),M_n(\mathbb{C}))$ is defined by the Choi matrix $K\in M_{(m-1)n}(\mathbb{C})$ through $\kappa(E_{st})=K_{st}, s,t=1,..,m-1$. It is obvious that $\gamma\in B(M_m(\mathbb{C}),M_n(\mathbb{C}))$ is the $m$-trivial lifting of $\kappa\in B(M_{m-1}(\mathbb{C}),M_n(\mathbb{C}))$. By Lemma 2.5, the map $\kappa\in B(M_{m-1}(\mathbb{C}),M_n(\mathbb{C}))$ is $(k-1)$-positive.
\end{proof}

A similar result holds for $k$-copositive maps.
\begin{cor}
  Let $\phi$ be a non-zero $k$-copositive $(2 \leq k < \min\{m,n\})$ map in $B(M_m(\mathbb{C}),M_n(\mathbb{C}))$. Then there exists a decomposition $\phi=\psi+\gamma$, where $\psi$ is a non-zero completely copositive map and $\gamma$ is a p-trivial lifting of a $(k-1)$-copositive map in $B(M_{m-1}(\mathbb{C}),M_n(\mathbb{C}))$, for some $p\in\{1,...,m\}$.
\end{cor}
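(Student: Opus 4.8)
The plan is to reduce the copositive statement to the already-proved positive case (Theorem 2.7) by post-composing with the transpose map $\tau_n$ on the target. The central observation is that $\phi$ is $k$-copositive if and only if $\tau_n\circ\phi$ is $k$-positive. Granting this, I would set $\phi' = \tau_n\circ\phi$, which is a non-zero $k$-positive map, and apply Theorem 2.7 to write $\phi' = \psi' + \gamma'$ with $\psi'$ non-zero completely positive and $\gamma'$ a $p$-trivial lifting of a $(k-1)$-positive map $\kappa'\in B(M_{m-1}(\mathbb{C}),M_n(\mathbb{C}))$. Composing once more with $\tau_n$ and using $\tau_n\circ\tau_n = id_n$ gives $\phi = \tau_n\circ\phi' = (\tau_n\circ\psi') + (\tau_n\circ\gamma')$; I would then set $\psi = \tau_n\circ\psi'$ and $\gamma = \tau_n\circ\gamma'$ and verify these have the desired properties.

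First I would establish the equivalence. Writing $\tau_{kn}$ for the transpose on $M_{kn}(\mathbb{C})\cong M_k(M_n(\mathbb{C}))$, one has $\tau_{kn} = \tau_k\otimes\tau_n$ under the tensor identification, and $\tau_{kn}$ preserves positive semi-definiteness since $X\ge 0 \iff X^t\ge 0$. A direct check on elementary tensors $A\otimes B$ gives $(\tau_k\otimes\tau_n)\circ(id_k\otimes(\tau_n\circ\phi)) = \tau_k\otimes\phi$. Hence for every $X\ge 0$ in $M_k(M_m(\mathbb{C}))$ one has $(id_k\otimes(\tau_n\circ\phi))(X)\ge 0$ if and only if $(\tau_k\otimes\phi)(X)\ge 0$, which is exactly the asserted equivalence between $k$-positivity of $\tau_n\circ\phi$ and $k$-copositivity of $\phi$. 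The same equivalence, applied for an arbitrary index and to $\kappa'$ in place of $\phi$, will handle complete (co)positivity and the reduced map below.

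Next I would transport the three required properties back along $\tau_n$. Since $\psi'$ is non-zero completely positive and post-composition with the bijective map $\tau_n$ kills nothing, $\psi = \tau_n\circ\psi'$ is non-zero; applying the equivalence for every $k$ (to $\psi'' = \tau_n\circ\psi'$, whose transpose-composite is $\psi'$) shows $\psi$ is completely copositive. For the remainder, I would use that the Choi matrix of $\tau_n\circ\gamma'$ is obtained from $C_{\gamma'}$ by transposing each $n\times n$ block; this operation fixes the inserted zero block row and column, so the $p$-trivial lifting structure is preserved and $\gamma = \tau_n\circ\gamma'$ is the $p$-trivial lifting of $\kappa = \tau_n\circ\kappa'$. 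Finally, since $\tau_n\circ\kappa = \kappa'$ is $(k-1)$-positive, the equivalence shows $\kappa$ is $(k-1)$-copositive, completing the decomposition.

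The main obstacle is purely bookkeeping rather than conceptual: one must verify carefully that post-composition with $\tau_n$ interacts correctly with the trivial-lifting construction, namely that block-wise transpose preserves the zero block row and column so that the lifting index $p$ is unchanged, and that the degree of (co)positivity is tracked correctly through the equivalence at each stage (for $\psi'$ at all levels $k$, for $\gamma'$ at level $k-1$). No estimate beyond Theorem 2.7 and the transpose--positivity equivalence is needed.
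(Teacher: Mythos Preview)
Your proposal is correct. Both you and the paper reduce the copositive case to the positive one via a transpose twist, but the implementations differ slightly: the paper reruns the block-matrix argument of Theorem~2.7 on the partially transposed Choi matrix $\sum_{i,j} e_{ji}\otimes\phi(e_{ij})$ (in effect working with $\phi\circ\tau_m$), whereas you post-compose with $\tau_n$ and invoke Theorem~2.7 as a black box. Your route has the minor advantage of not reopening the proof of Theorem~2.7 and of making explicit why the trivial-lifting structure survives (block-wise transpose preserves the zero block row and column); the paper's one-line proof is terser but leaves those checks implicit. Conceptually the two are equivalent, since $\phi$ is $k$-copositive iff $\tau_n\circ\phi$ is $k$-positive iff $\phi\circ\tau_m$ is $k$-positive.
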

\begin{proof}
  If $\phi$ is $k$-copositive, using the same arguments in proof of Theorem 2.7 for the matrix $\sum_{i,j=1}^{m}e_{ji}\otimes \phi(e_{ij})$, one obtains a decomposition $\sum_{i,j=1}^{m}e_{ji}\otimes \phi(e_{ij})=\sum_{i,j=1}^{m}e_{ji}\otimes \psi(e_{ij})+\sum_{i,j=1}^{m}e_{ji}\otimes \gamma(e_{ij})$, where $\psi$ is a non-zero completely copositive map and $\gamma$ is a $(k-1)$-copositive map which is a trivial lifting of a map in $B(M_{m-1}(\mathbb{C}),M_n(\mathbb{C}))$.
\end{proof}

\begin{thm}
  Let $2 \leq k < \min\{m,n\}$.
  Any non-zero $k$-positive (respectively $k$-copositive) map in $B(M_{m}(\mathbb{C}),M_{n}(\mathbb{C}))$ is the sum of at most $(k-1)$ many non-zero completely positive (respectively completely copositive) maps and a positive map which is the trivial lifting of a positive map in $B(M_{m-k+1}(\mathbb{C}),M_{n}(\mathbb{C}))$.
\end{thm}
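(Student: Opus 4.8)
The plan is to prove the statement by induction on $k$, using Theorem~2.7 as the basic peeling step (and Corollary~2.10 for the $k$-copositive case, which is entirely analogous). Throughout, the target dimension $n$ and the hypothesis $k<\min\{m,n\}$ travel along with each reduction, so the verification that the admissible range is preserved is part of the bookkeeping. For the base case $k=2$, Theorem~2.7 applied to a non-zero $2$-positive map $\phi\in B(M_m(\mathbb{C}),M_n(\mathbb{C}))$ with $2<\min\{m,n\}$ yields $\phi=\psi+\gamma$, where $\psi$ is a non-zero completely positive map and $\gamma$ is a trivial lifting of a $1$-positive (hence positive) map in $B(M_{m-1}(\mathbb{C}),M_n(\mathbb{C}))$; since here $k-1=1$ and $m-(k-1)=m-1$, this is precisely the assertion for $k=2$.

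For the inductive step, assume the statement for $k-1$ (so $k\geq3$) and let $\phi$ be a non-zero $k$-positive map with $k<\min\{m,n\}$. First I would apply Theorem~2.7 to write $\phi=\psi_1+\gamma_1$, where $\psi_1$ is a non-zero completely positive map and $\gamma_1=\tilde{\kappa}_p$ is a $p$-trivial lifting of a $(k-1)$-positive map $\kappa\in B(M_{m-1}(\mathbb{C}),M_n(\mathbb{C}))$. If $\kappa=0$, then $\phi=\psi_1$ is a single non-zero completely positive map and one may take the final positive summand to be the zero map, so the conclusion holds. Otherwise $\kappa\neq0$, and from $k<m$ and $k<n$ we obtain $k-1<m-1$ and $k-1<n$, whence $k-1<\min\{m-1,n\}$; thus the inductive hypothesis applies to $\kappa$ and yields $\kappa=\sigma+\mu$, where $\sigma$ is a sum of at most $(k-2)$ non-zero completely positive maps and $\mu$ is a trivial lifting of a positive map $\rho$ in $B(M_{(m-1)-(k-1)+1}(\mathbb{C}),M_n(\mathbb{C}))=B(M_{m-k+1}(\mathbb{C}),M_n(\mathbb{C}))$.

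It then remains to transport this decomposition of $\kappa$ back up to level $m$. Since the trivial lifting merely pads the Choi matrix with zero rows and columns, it is linear in the underlying map, so $\gamma_1=\tilde{\kappa}_p=\tilde{\sigma}_p+\tilde{\mu}_p$. Two observations close the argument. First, a trivial lifting of a completely positive map is completely positive, because padding a positive Choi matrix with zero blocks leaves it positive (equivalently, by Lemma~2.5 the lifting preserves $k$-positivity for every $k$); hence $\tilde{\sigma}_p$ is a sum of at most $(k-2)$ non-zero completely positive maps. Second, a trivial lifting of a trivial lifting is again a trivial lifting, obtained by inserting the two zero-index sets successively as in Remark~2.6; hence $\tilde{\mu}_p$ is a trivial lifting of the positive map $\rho\in B(M_{m-k+1}(\mathbb{C}),M_n(\mathbb{C}))$. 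Assembling, $\phi=\psi_1+\tilde{\sigma}_p+\tilde{\mu}_p$ exhibits $\phi$ as a sum of at most $1+(k-2)=k-1$ non-zero completely positive maps together with a trivial lifting of a positive map in $B(M_{m-k+1}(\mathbb{C}),M_n(\mathbb{C}))$, completing the induction. The $k$-copositive statement follows by the same induction with Corollary~2.10 in place of Theorem~2.7. I do not anticipate a genuine obstacle here: the entire content sits in Theorem~2.7, and the remaining work is bookkeeping --- keeping each reduced index pair within the admissible range $2\le k'<\min\{m',n\}$, disposing of the terminating case $\kappa=0$, and checking that trivial lifting respects the completely positive / positive classification and composes with itself.
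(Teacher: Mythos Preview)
Your proposal is correct and follows essentially the same approach as the paper: the paper's proof is a single sentence instructing one to apply Theorem~2.7 (respectively Corollary~2.10) repeatedly until the remainder is merely positive, and your induction on $k$ is precisely a careful unpacking of that iteration, with the additional bookkeeping (range constraints, the $\kappa=0$ termination, and the fact that trivial liftings compose and preserve complete positivity) made explicit.
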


\begin{proof}
  For a $k$-positive linear map $\phi$, repeatedly using Theorem 2.7 (respectively Corollary 2.10) until the remainder is a positive map.
\end{proof}

The Choi decomposition may no longer be valid for a general positive map $\phi$ even when $\phi$ is in $B(M_2(\mathbb{C}),M_2(\mathbb{C}))$. And it may not necessarily give us an algorithm to decompose a positive map in $B(M_2(\mathbb{C}),M_2(\mathbb{C}))$ as the sum of a completely positive map and a completely copositive map. Let us illustrate this
 by a simple example in $B(M_2(\mathbb{C}),M_2(\mathbb{C}))$.
\begin{exmp}
  Let $\epsilon$ be a real number and $\omega$ in $B(M_2(\mathbb{C}),M_2(\mathbb{C}))$ be defined through its Choi matrix:
\begin{align*}
C_{\omega}=&\left(
\begin{array}{cccc}
1 &0& 0 & \varepsilon \\
0 & 0 & \varepsilon & 0 \\
0 &\varepsilon & 0 & 0 \\
\varepsilon & 0& 0 & 1 \\
\end{array}\right),
\end{align*}
Hence the map $\omega$ is given by
$$\omega\begin{bmatrix}
                a & b \\
                c & d
              \end{bmatrix}=\begin{bmatrix}
                a & \varepsilon(b+c) \\
                \varepsilon(b+c) & d
              \end{bmatrix},\quad a,b,c,d\in\mathbb{C}.$$
For $\omega$ to be positive, it suffices to show for any vector $y=(y_1,y_2)^{T}\in \mathbb{C}^2$, the matrix
\begin{align*}
&|y_1|^2
\left(\begin{array}{cc}
1 &0 \\
0 & 0 \\
\end{array}\right)+
y_1\overline{y_2}
\left(\begin{array}{cc}
0 & \varepsilon \\
\varepsilon & 0 \\
\end{array}\right)+
y_2\overline{y_1}
\left(\begin{array}{cc}
0 & \varepsilon \\
\varepsilon & 0 \\
\end{array}\right)+
|y_2|^2
\left(\begin{array}{cc}
0 & 0 \\
0 & 1 \\
\end{array}\right)\\
&=
\left(\begin{array}{cc}
|y_1|^2 & 2\varepsilon Re(y_1\overline{y_2}) \\
2\varepsilon Re(y_1\overline{y_2}) & |y_2|^2 \\
\end{array}\right)
\end{align*}
is positive. This is equivalent to the condition that $-\frac{1}{2}\leq\varepsilon\leq\frac{1}{2}$. For Choi decomposition, using $A_{11}^{\dagger}=\left(
\begin{array}{cc}
1 &0   \\
0 & 0 \\
\end{array}\right)$, we have
\begin{align*}
C_{\omega}=\left(
\begin{array}{cccc}
1 &0& 0 & \varepsilon \\
0 & 0 & 0 & 0 \\
0 &0 & 0 & 0 \\
\varepsilon & 0& 0 & \varepsilon^2 \\
\end{array}\right)+
\left(
\begin{array}{cccc}
0 &0& 0 & 0\\
0 & 0 & \varepsilon & 0 \\
0 &\varepsilon & 0 & 0 \\
0 & 0& 0 & 1-\varepsilon^2 \\
\end{array}\right),
\end{align*}
and using $A_{22}^{\dagger}=\left(
\begin{array}{cc}
0 &0  \\
0 & 1 \\
\end{array}\right)$, we have
\begin{align*}
C_{\omega}=
\left(
\begin{array}{cccc}
\varepsilon^2 &0& 0 & \varepsilon \\
0 & 0 & 0 & 0 \\
0 &0 & 0 &0\\
\varepsilon & 0& 0 & 1 \\
\end{array}\right)+
\left(
\begin{array}{cccc}
1-\varepsilon^2  &0& 0 & 0\\
0 & 0 & \varepsilon & 0 \\
0 &\varepsilon & 0 & 0 \\
0 & 0& 0 &0 \\
\end{array}\right).
\end{align*}
In each of the two equations above, the last matrix corresponds to a linear map which is not positive. Meanwhile to decompose the map $\omega$ as the sum of a completely positive map and a completely copositive map, one splits the original matrix as follows:
\begin{align*}
C_{\omega}=
\left(
\begin{array}{cccc}
1/2 &0& 0 & \varepsilon \\
0 & 0 & 0 & 0 \\
0 &0 & 0 & 0 \\
\varepsilon & 0& 0 & 1/2 \\
\end{array}\right)+
\left(
\begin{array}{cccc}
1/2 &0& 0 & 0\\
0 & 0 & \varepsilon & 0 \\
0 &\varepsilon & 0 & 0 \\
0 & 0& 0 & 1/2 \\
\end{array}\right).
\end{align*}
Obviously under this splitting, the second and the third matrix in the above equation correspond to a completely positive map $\psi_1$ and a completely copositive map $\psi_2$, respectively, where $\psi_1\begin{bmatrix}
                a & b \\
                c & d
              \end{bmatrix}=\begin{bmatrix}
               \frac{a}{2} & \varepsilon b \\
                \varepsilon c & \frac{d}{2}
              \end{bmatrix}$ and $\psi_2\begin{bmatrix}
                a & b \\
                c & d
              \end{bmatrix}=\begin{bmatrix}
                \frac{a}{2} & \varepsilon c \\
                \varepsilon b & \frac{d}{2}
              \end{bmatrix}$.
\end{exmp}

\section{A reduced situation in $B(M_3(\mathbb{C}),M_3(\mathbb{C}))$}
In low dimensional cases such as $B(M_2(\mathbb{C}),M_2(\mathbb{C}))$ and $B(M_2(\mathbb{C}),M_3(\mathbb{C}))$, Woronowicz and St{\o}rmer respectively showed that every positive map is decomposable (see \cite{wor}). In this section, we will show that in $B(M_3(\mathbb{C}),M_3(\mathbb{C}))$, although positive maps may not be decomposable, 2-positive maps are always decomposable. Let us start with a useful lemma. For any $p\in\{1,...,m\}$, we assume that $\tilde{\chi}_p\in B(M_{m}(\mathbb{C}),M_n(\mathbb{C}))$ is the $p$-trivial lifting of a positive map $\chi\in B(M_{m-1}(\mathbb{C}),M_n(\mathbb{C}))$.
\par

\begin{lem}
  If $\chi$ is decomposable in $B(M_{m-1}(\mathbb{C}),M_n(\mathbb{C}))$, then its trivial lifting $\tilde{\chi}_p$ is also decomposable in $B(M_m(\mathbb{C}),M_n(\mathbb{C}))$.
\end{lem}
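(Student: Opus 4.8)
The plan is to reduce the claim to Lemma 2.5 together with the linearity of the trivial-lifting construction. Since $\chi$ is decomposable, I would first write $\chi = \chi_1 + \chi_2$, where $\chi_1$ is completely positive and $\chi_2$ is completely copositive, both lying in $B(M_{m-1}(\mathbb{C}),M_n(\mathbb{C}))$. Let $\psi_1$ and $\psi_2$ denote their respective $p$-trivial liftings in $B(M_m(\mathbb{C}),M_n(\mathbb{C}))$.

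The first step is to verify that trivial lifting is additive, namely that $\tilde{\chi}_p = \psi_1 + \psi_2$. By Definition 2.4, the Choi matrix $C_{\tilde{\chi}_p}$ is obtained from $C_{\chi}$ by inserting a row and a column of $n \times n$ zero blocks at the $p$-th level. This insertion is a linear operation on block matrices, and since the correspondence $\phi \mapsto C_{\phi}$ is a linear bijection (Remark 2.3), inserting zeros at the $p$-th level into $C_{\chi} = C_{\chi_1} + C_{\chi_2}$ produces $C_{\psi_1} + C_{\psi_2}$. Hence $\tilde{\chi}_p = \psi_1 + \psi_2$.

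The second step is to identify the nature of each summand via Lemma 2.5. Complete positivity means $k$-positivity for every $k$; since $\chi_1$ is $k$-positive for all $k$, Lemma 2.5 gives that $\psi_1$ is $k$-positive for every $k$, i.e. $\psi_1$ is completely positive. Symmetrically, since $\chi_2$ is $k$-copositive for every $k$, Lemma 2.5 shows that $\psi_2$ is completely copositive. Therefore $\tilde{\chi}_p = \psi_1 + \psi_2$ exhibits the trivial lifting as the sum of a completely positive map and a completely copositive map, so $\tilde{\chi}_p$ is decomposable.

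I expect no serious obstacle here; the construction is transparent once one observes that lifting is linear. The only points demanding care are ensuring that both component maps are lifted at the \emph{same} index $p$, so that their Choi matrices genuinely sum to $C_{\tilde{\chi}_p}$, and reading ``completely positive/copositive'' as ``$k$-positive/copositive for all $k$'' so that the per-$k$ statement of Lemma 2.5 applies directly to each summand.
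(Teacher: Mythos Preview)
Your proof is correct and follows essentially the same approach as the paper: decompose $\chi=\chi_1+\chi_2$ into a completely positive and a completely copositive part, lift each at the same index $p$, invoke linearity of the lifting, and apply Lemma~2.5 to conclude that the lifted summands remain completely positive and completely copositive, respectively. Your explicit justification of the linearity of the lifting via the Choi-matrix correspondence is a slight elaboration on what the paper asserts without proof.
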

\begin{proof}
  Given a decomposable map $\chi\in B(M_{m-1}(\mathbb{C}),M_n(\mathbb{C}))$, then $\chi=\chi^1+\chi^2$, where $\chi^1$ is completely positive and $\chi^2$ is completely copositive. By Lemma 2.5, one obtains a completely positive map $\widetilde{\chi^1}_p$ and a completely copositive map $\widetilde{\chi^2}_p$ through $p$-trivial lifting of $\chi^1$ and $\chi^2$, respectively. By linearity of the trivial lifting, $\tilde{\chi}_p=(\widetilde{\chi^1+\chi^2})_p=\widetilde{\chi^1}_p+\widetilde{\chi^2}_p$ is decomposable in $B(M_m(\mathbb{C}),M_n(\mathbb{C}))$.
\end{proof}

The next result gives an affirmative answer to Conjecture 1.1.
\begin{thm}
   Every 2-positive or 2-copositive map $\phi$ in $B(M_3(\mathbb{C}),M_3(\mathbb{C}))$ is decomposable.
\end{thm}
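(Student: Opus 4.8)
The plan is to combine the dimension-lowering decomposition established in Section~2 with the classical fact, recalled at the opening of this section, that every positive map in $B(M_2(\mathbb{C}),M_3(\mathbb{C}))$ is decomposable. The key observation is that for a $2$-positive map on $M_3(\mathbb{C})$ a \emph{single} application of the Choi decomposition already drops the domain from $M_3$ to $M_2$, landing us precisely in the low-dimensional regime where decomposability is known outright.

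First I would dispose of the trivial case $\phi=0$. Assuming $\phi\neq 0$ is $2$-positive, I would apply Theorem~2.11 with $m=n=3$ and $k=2$ (the hypothesis $2\leq 2<\min\{3,3\}$ is met), obtaining $\phi=\psi+\gamma$, where $\psi$ is a (possibly zero) completely positive map and $\gamma$ is the trivial lifting of a positive map $\kappa\in B(M_{3-2+1}(\mathbb{C}),M_3(\mathbb{C}))=B(M_2(\mathbb{C}),M_3(\mathbb{C}))$. Equivalently, one application of Theorem~2.7 yields the same splitting, since a $1$-positive map is exactly a positive map, so that the remainder $\gamma$ is a $p$-trivial lifting of a positive map in $B(M_2(\mathbb{C}),M_3(\mathbb{C}))$.

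Next I would invoke the classical theorem recalled at the start of this section (the $B(M_2,M_3)$ case, due to St{\o}rmer): the positive map $\kappa\in B(M_2(\mathbb{C}),M_3(\mathbb{C}))$ is decomposable. Since decomposability is preserved under trivial lifting by Lemma~3.1, the summand $\gamma$ is decomposable in $B(M_3(\mathbb{C}),M_3(\mathbb{C}))$. As $\psi$ is completely positive and hence trivially decomposable, $\phi=\psi+\gamma$ is a sum of two decomposable maps and is therefore decomposable. For the $2$-copositive case I would run the identical argument with Corollary~2.10 in place of Theorem~2.7/2.11, producing $\phi=\psi+\gamma$ with $\psi$ completely copositive and $\gamma$ the trivial lifting of a $1$-copositive (that is, positive) map in $B(M_2(\mathbb{C}),M_3(\mathbb{C}))$; the same St{\o}rmer/Lemma~3.1 combination then finishes the proof.

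I expect that there is no genuine obstacle in this final argument itself: all the substantive work resides in the machinery already set up, namely the Choi decomposition (Theorem~2.7) that peels off a completely positive summand while lowering the domain dimension by one. The only conceptual step here is recognizing that after a single peel-off the remainder lives, up to trivial lifting, in $B(M_2(\mathbb{C}),M_3(\mathbb{C}))$ --- exactly the threshold $m-k+1=2$ at which the classical Woronowicz--St{\o}rmer decomposability results apply --- so that no new estimate or construction is needed. The delicate point is thus not in the present proof but in having matched the peel-off formulation of Section~2 to this classical boundary, which the earlier sections supply.
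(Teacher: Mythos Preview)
Your proposal is correct and follows essentially the same route as the paper's own proof: apply the peel-off decomposition (Theorem~2.7/Corollary~2.10) once to write $\phi=\psi+\tilde{\kappa}_p$ with $\psi$ completely (co)positive and $\kappa\in B(M_2(\mathbb{C}),M_3(\mathbb{C}))$ positive, then use the Woronowicz result together with Lemma~3.1. The only minor slip is attributional---the $B(M_2,M_3)$ case is due to Woronowicz rather than St{\o}rmer---but mathematically your argument matches the paper's exactly.
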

\begin{proof}
  Without loss of generality, we assume the 2-positive(respectively 2-copositive) map $\phi$ is not zero. In this concrete case of $B(M_3(\mathbb{C}),M_3(\mathbb{C}))$, the peel-off process yields that:
\begin{align*}
  	\phi=\psi+\tilde{\kappa}_p \ \textit{for some p}\in\{1,...,m\}
\end{align*}
  where $\psi$ is completely positive (respectively completely copositive) and $\tilde{\kappa}_p$ is a p-trivial lifting of a positive map $\kappa\in B(M_2(\mathbb{C}),M_3(\mathbb{C}))$. Since every positive map in $B(M_2(\mathbb{C}),M_3(\mathbb{C}))$ is decomposable in $B(M_2(\mathbb{C}),M_3(\mathbb{C}))$ (see \cite{wor}), by Lemma 3.1, the lifted map $\tilde{\kappa}_p$ is decomposable in $B(M_3(\mathbb{C}),M_3(\mathbb{C}))$. Hence, $\phi=\psi+\tilde{\kappa}_p$ is also decomposable.
\end{proof}
\begin{defn}
  A positive linear map in $B(M_{m}(\mathbb{C}),M_n(\mathbb{C}))$ is called atomic if it is not the sum of a 2-positive map and a 2-copositive map.
\end{defn}
\begin{rem}
  From the definition, an atomic map in $B(M_{m}(\mathbb{C}),M_n(\mathbb{C}))$ is indecomposable. The converse is true when $m=n=3$.	
\end{rem}

\begin{cor}
  Every indecomposable map in $B(M_3(\mathbb{C}),M_3(\mathbb{C}))$ is atomic.
\end{cor}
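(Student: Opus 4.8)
The plan is to prove the contrapositive. Rather than starting from an indecomposable map, I would start from a positive map $\phi$ in $B(M_3(\mathbb{C}),M_3(\mathbb{C}))$ that is \emph{not} atomic and show it must be decomposable; negating both sides then yields the corollary. By the definition of atomicity, non-atomicity of $\phi$ means there is a splitting $\phi = \phi_1 + \phi_2$ in which $\phi_1$ is 2-positive and $\phi_2$ is 2-copositive. Since 2-positivity and 2-copositivity each imply positivity, both $\phi_1$ and $\phi_2$ are genuine positive maps in $B(M_3(\mathbb{C}),M_3(\mathbb{C}))$, and hence each is eligible for Theorem~3.2.

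The decisive step is to apply Theorem~3.2 to both summands. It guarantees that the 2-positive map $\phi_1$ is decomposable, say $\phi_1 = \alpha_1 + \beta_1$ with $\alpha_1$ completely positive and $\beta_1$ completely copositive, and likewise that the 2-copositive map $\phi_2$ is decomposable, say $\phi_2 = \alpha_2 + \beta_2$ with $\alpha_2$ completely positive and $\beta_2$ completely copositive. Regrouping the four terms gives
\[
\phi = (\alpha_1 + \alpha_2) + (\beta_1 + \beta_2).
\]
Invoking the elementary closure facts that a sum of completely positive maps is completely positive and a sum of completely copositive maps is completely copositive, the first bracket is completely positive and the second is completely copositive, so $\phi$ is decomposable by definition.

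Contraposing the implication just obtained delivers the statement: a positive map in $B(M_3(\mathbb{C}),M_3(\mathbb{C}))$ that is not decomposable cannot be written as a sum of a 2-positive and a 2-copositive map, i.e.\ it is atomic. I do not expect a genuine obstacle in this argument; all of the difficulty has already been discharged in Theorem~3.2, and the remaining ingredients are routine. The only point warranting a moment of care is confirming that the summands $\phi_1,\phi_2$ arising from non-atomicity are themselves positive maps, so that Theorem~3.2 legitimately applies to each; this is immediate from the fact that $k$-positivity and $k$-copositivity both entail positivity.
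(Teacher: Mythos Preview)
Your proof is correct and matches the paper's intended argument: the corollary is stated without proof, immediately after a remark noting that the converse of ``atomic $\Rightarrow$ indecomposable'' holds when $m=n=3$, which is exactly the contrapositive you establish via Theorem~3.2. One small comment: your aside about checking that $\phi_1,\phi_2$ are positive is unnecessary, since Theorem~3.2 takes 2-positivity (resp.\ 2-copositivity) as its hypothesis directly, not mere positivity.
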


\begin{cor}
  Under the dual cone correspondence (see \cite{facial}), one can completely determine the set inclusion relations in $B(M_3(\mathbb{C}),M_3(\mathbb{C}))$ as follows:
$$\begin{array}{ccccccl}
\mathbb{V}_1 & \subsetneqq &\mathbb{T} & \subsetneqq &  \mathbb{V}_2 & \subsetneqq &\mathbb{V}_{3}=(M_3(\mathbb{C})\otimes M_3(\mathbb{C}))^+\\
\updownarrow &            & \updownarrow &            &    \updownarrow &       & \updownarrow                \\
\mathbb{P}_1 & \supsetneqq & \mathbb{D} & \supsetneqq &  \mathbb{P}_2 &\supsetneqq & \mathbb{P}_{3}\cong(M_3(\mathbb{C})\otimes M_3(\mathbb{C}))^+\\
\end{array}$$
Here we denote by $\mathbb{V}_k$ the set of all quantum states of Schmidt number k, $\mathbb{P}_k$ the set of all k-positive maps,
$\mathbb{D}$ the cone of all decomposable maps and $\mathbb{T}$ the cone of all positive partial transpose states.
\end{cor}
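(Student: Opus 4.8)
The plan is to read off the entire $3\times 3$ diagram from the two coarser diagrams already displayed in the introduction together with Theorem 3.2, certifying each strict inclusion by naming an explicit witness. Specializing those introductory diagrams to $m=n=3$ already supplies the outer strict inclusions $\mathbb{V}_1\subsetneqq\mathbb{V}_2\subsetneqq\mathbb{V}_3=(M_3(\mathbb{C})\otimes M_3(\mathbb{C}))^+$ and $\mathbb{P}_1\supsetneqq\mathbb{P}_2\supsetneqq\mathbb{P}_3$, together with the vertical dual-cone equivalences $\mathbb{P}_k\leftrightarrow\mathbb{V}_k$ and the equivalence $\mathbb{D}\leftrightarrow\mathbb{T}$ recorded there for the pair $(\mathbb{D}[m,n],\mathbb{T}[m,n])$. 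Hence the only genuinely new task is to locate the single middle pair $(\mathbb{D},\mathbb{T})$ relative to its neighbours.

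First I would pin down the bottom row of maps. Since every completely positive and every completely copositive map is positive, their sum is positive, giving $\mathbb{D}\subseteq\mathbb{P}_1$; and Theorem 3.2 furnishes exactly the reverse-direction containment $\mathbb{P}_2\subseteq\mathbb{D}$, so that $\mathbb{P}_2\subseteq\mathbb{D}\subseteq\mathbb{P}_1$. For strictness I would invoke two classical witnesses. The Choi map is positive but indecomposable, hence lies in $\mathbb{P}_1\setminus\mathbb{D}$, giving $\mathbb{D}\subsetneqq\mathbb{P}_1$. The transpose map $\tau_3$ is completely copositive (as $\tau_3\circ\tau_3=id_3$ is completely positive), hence belongs to $\mathbb{D}$, yet it is not $2$-positive, since partial transposition destroys positivity already on a maximally entangled qubit pair embedded in $\mathbb{C}^3\otimes\mathbb{C}^3$; thus $\tau_3\in\mathbb{D}\setminus\mathbb{P}_2$, giving $\mathbb{P}_2\subsetneqq\mathbb{D}$.

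Next I would obtain the top row of states by dualizing under the nondegenerate pairing $\langle\cdot,\cdot\rangle$ of the introduction. Passing to dual cones reverses inclusions, so $\mathbb{P}_2\subseteq\mathbb{D}\subseteq\mathbb{P}_1$ transforms into $\mathbb{V}_1\subseteq\mathbb{T}\subseteq\mathbb{V}_2$, using the vertical correspondences above. For the two strict inclusions on the state side it is cleanest to dualize the two map-side witnesses directly, but one may also name them explicitly: a PPT entangled (bound entangled) state in $\mathbb{C}^3\otimes\mathbb{C}^3$, whose existence is classical, lies in $\mathbb{T}\setminus\mathbb{V}_1$ and yields $\mathbb{V}_1\subsetneqq\mathbb{T}$; while any pure state $zz^*$ of Schmidt rank $2$ is positive yet has a negative partial transpose, so it lies in $\mathbb{V}_2\setminus\mathbb{T}$ and yields $\mathbb{T}\subsetneqq\mathbb{V}_2$.

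The conceptual weight rests entirely on Theorem 3.2, which supplies the one inclusion $\mathbb{P}_2\subseteq\mathbb{D}$ that is not purely formal; everything else amounts to naming standard examples and turning the crank of duality. The only point demanding a little care, and the step I expect to be the main (if modest) obstacle, is to confirm that the finite-dimensional dual-cone correspondence transports strict inclusions faithfully across the pairing. Because all the cones involved are closed and convex (indeed each $\mathbb{V}_k$, $\mathbb{V}^k$ is defined as a double dual cone, so the bipolar identity $C^{\circ\circ}=C$ holds throughout) and the bilinear form is nondegenerate, a strict inclusion $C_1\subsetneqq C_2$ forces $C_2^{\circ}\subsetneqq C_1^{\circ}$: equality $C_2^{\circ}=C_1^{\circ}$ would give $C_1=C_1^{\circ\circ}=C_2^{\circ\circ}=C_2$. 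Consequently the two map-side witnesses already pin down all six strict containments simultaneously, and the diagram follows.
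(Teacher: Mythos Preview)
Your proposal is correct and follows exactly the route the paper intends: the corollary is stated without its own proof because it is meant to be read off from Theorem~3.2 together with the dual-cone framework of the introduction, and you have simply written out those implicit steps in full, including standard witnesses (the Choi map, the transpose, a PPT entangled state, a Schmidt-rank-$2$ pure state) and the bipolar argument for transferring strict inclusions.
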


\begin{rem}
	The inclusion $\mathbb{T} \subset \mathbb{V}_2$ gives an affirmative answer to Conjecture 1.2, stating that all positive partial transpose entangled states in $3\times3$ system have Schmidt number 2.
\end{rem}

\begin{exmp}
We will illustrate Choi decomposition using the 2-positive generalized Choi maps $\Phi[a,b,c]$  defined in \cite{gchoi} by
\begin{align*}
\Phi[a,b,c](X)=
\left(\begin{array}{ccc}
ax_{11}+bx_{22}+cx_{33} & -x_{12} & -x_{13}\\
-x_{21} & cx_{11}+ax_{22}+bx_{33} & -x_{23} \\
-x_{31} & -x_{32} & bx_{11}+cx_{22}+ax_{33}
\end{array}\right)
\end{align*}
for $X=[x_{ij}]\in M_{3}(\mathbb{C}^3)$, where $a,b,c,d$ are nonnegative numbers. Note that $\Phi[a,b,c]$ is 2-positive if and only if $a\geq2$ or $[1\leq a<2]\wedge[bc\geq(2-a)(b+c)]$. Let us consider the non-trivial case when the map $\Phi[a,b,c]$ is 2-positive but not completely positive. Hence $a\in [1,2)$ and $bc\geq(2-a)(b+c)$, which imply that
$$(*)\quad \ \ \ bc\geq4(2-a)^2\geq(a-1-\frac{2}{a})^2.$$
The Choi matrix of the map $\Phi[a,b,c]$ is
\\[1pt]

\scalebox{0.9}{%
  \begin{minipage}{0.0\linewidth}
\begin{align*}
C_{\Phi[a,b,c]}=
\left(\begin{array}{ccccccccc}
a &0 & 0& 0& -1& 0& 0&0 &-1\\
 0&c & 0& 0&0 & 0& 0&0 &0\\
0 & 0&b &0 & 0 &0 &0 & 0&0\\
0&0 &0 &b &0 &0 & 0&0 &0\\
-1 &0 &0 &0 & a& 0& 0&0 &-1\\
  0 &0 & 0&0 & 0& c& 0&0 &0\\
0& 0& 0& 0& 0& 0&c &0 &0\\
0 &0 &0 & 0&0 &0 &0 &b &0\\
-1 &0 & 0&0 & -1&0 &0 &0 &a\\
\end{array}\right) .
\end{align*}   \end{minipage}
}
\\[3pt]

Using $A_{11}^{\dagger}\triangleq \Phi[a,b,c](E_{11})^{\dagger}$, the Choi decomposition for the $2$-positive map $\Phi[a,b,c]$ is
\\[1pt]

\scalebox{0.9}{%
  \begin{minipage}{0.0\linewidth}
\begin{align*}
C_{\Phi[a,b,c]}=
\left(\begin{array}{ccc|ccc|ccc}
a &0 & 0& 0& -1&0 &0 & 0&-1\\
 0&c &0 &0 &0 & 0&0 &0 &0\\
 0&0 &b &0 & 0 &0 & 0&0 &0\\
0& 0&0 &0 & 0& 0& 0&0 &0\\
-1 &0 &0 & 0& \frac{1}{a}&0 & 0& 0&\frac{1}{a}\\
 0  &0 & 0& 0& 0& 0& 0&0 &0\\
0&0 &0 &0 & 0& 0&0 &0 &0\\
0 &0 &0 &0 &0 &0 &0 & 0&0\\
-1 &0 & 0&0 & \frac{1}{a}&0 &0 &0 &\frac{1}{a}\\
\end{array}\right)
+
\left(\begin{array}{ccc|ccc|ccc}
 0&0 &0 &0 & 0&0 &0 &0 &0\\
0 &0 & 0& 0& 0& 0&0 &0 &0\\
0 &0 & 0& 0&0  &0 &0 &0 &0\\
0&0 & 0&b &0 & 0& 0& 0&0\\
0 &0 & 0& 0& a-\frac{1}{a}&0 &0 &0 &-1-\frac{1}{a}\\
 0  & 0&0 &0 &0 & c&0 & 0&0\\
0& 0&0 &0 &0 &0 &c & 0&0\\
 0&0 &0 &0 &0 & 0&0 &b &0\\
0 &0 &0 &0 & -1-\frac{1}{a}&0 &0 &0 &a-\frac{1}{a}\\
\end{array}\right).
\end{align*}\end{minipage}
}
\\[3pt]

By 1-trivial lifting, the last matrix can be regarded as the Choi matrix for a positive linear map $\Upsilon$ in $B(M_{2}(\mathbb{C}),M_3(\mathbb{C}))$. While every positive map in $B(M_{2}(\mathbb{C}),M_3(\mathbb{C}))$ is decomposable, one obtains a decomposition of $\Upsilon$ as follows:
\begin{align*}
C_{\Upsilon}&=
\left(\begin{array}{ccc|ccc}
b &0 & 0& 0& 0&0\\
0& a-\frac{1}{a}&0 &0 &0 &-1-\frac{1}{a}\\
0 &0 & c&0 & 0&0\\
0 &0 &0 &c & 0&0\\
0 &0 & 0&0 &b &0\\
0 & -1-\frac{1}{a}&0 &0 &0 &a-\frac{1}{a}\\
\end{array}\right)\\
&=
\left(\begin{array}{ccc|ccc}
b &0 & 0& 0& 0&0\\
0& a-\frac{1}{a}&0 &0 &0 &\frac{1}{a}-a\\
0 &0 & 0&0 & 0&0\\
0 &0 &0 &c & 0&0\\
0 &0 & 0&0 &0 &0\\
0 & \frac{1}{a}-a&0 &0 &0 &a-\frac{1}{a}\\
\end{array}\right)
+
\left(\begin{array}{ccc|ccc}
0 &0 & 0& 0& 0&0\\
0& 0&0 &0 &0 &a-1-\frac{2}{a}\\
0 &0 & c&0 & 0&0\\
0 &0 &0 &0 & 0&0\\
0 &0 & 0&0 &b &0\\
0 & a-1-\frac{2}{a}&0 &0 &0 &0\\
\end{array}\right).
\end{align*}
Combining with the Choi decomposition of $C_{\Phi[a,b,c]}$, we have
\\[1pt]

\scalebox{0.9}{%
  \begin{minipage}{0.0\linewidth}
\begin{align*}
C_{\Phi[a,b,c]}=
\left(\begin{array}{ccc|ccc|ccc}
a &0 & 0& 0& -1&0 &0 & 0&-1\\
 0&c &0 &0 &0 & 0&0 &0 &0\\
 0&0 &b &0 & 0 &0 & 0&0 &0\\
0& 0&0 &b & 0& 0& 0&0 &0\\
-1 &0 &0 & 0& a&0 & 0& 0&\frac{2}{a}-a\\
 0  &0 & 0& 0& 0& 0& 0&0 &0\\
0&0 &0 &0 & 0& 0&c &0 &0\\
0 &0 &0 &0 &0 &0 &0 & 0&0\\
-1 &0 & 0&0 & \frac{2}{a}-a&0 &0 &0 &a\\
\end{array}\right)
+
\left(\begin{array}{ccc|ccc|ccc}
 0&0 &0 &0 & 0&0 &0 &0 &0\\
0 &0 & 0& 0& 0& 0&0 &0 &0\\
0 &0 & 0& 0&0  &0 &0 &0 &0\\
0&0 & 0&b &0 & 0& 0& 0&0\\
0 &0 & 0& 0& 0&0 &0 &0 &a-1-\frac{2}{a}\\
 0  & 0&0 &0 &0 & c&0 & 0&0\\
0& 0&0 &0 &0 &0 &0 & 0&0\\
 0&0 &0 &0 &0 & 0&0 &b &0\\
0 &0 &0 &0 & a-1-\frac{2}{a}&0 &0 &0 &0\\
\end{array}\right).
\end{align*}\end{minipage}
}
\\[3pt]

Hence $\Phi[a,b,c]=\Phi_1+\Phi_2$, where
\begin{align*}
\Phi_1&\begin{bmatrix}
                x_{11} & x_{12}& x_{13} \\
                x_{21} & x_{22}&x_{23} \\
                x_{31} & x_{32} & x_{33}
              \end{bmatrix}=\begin{bmatrix}
                 ax_{11}+bx_{22}+cx_{33}& -x_{12}&-x_{13}\\
                -x_{21}& cx_{11}+ax_{22}&(\frac{2}{a}-a)x_{23}\\
                -x_{31}& (\frac{2}{a}-a)x_{32}& bx_{11}+ax_{33}
              \end{bmatrix},\\
  \Phi_2&\begin{bmatrix}
                x_{11} & x_{12}& x_{13} \\
                x_{21} & x_{22}&x_{23} \\
                x_{31} & x_{32} & x_{33}
              \end{bmatrix}=\begin{bmatrix}
                 0& 0&0\\
                0& bx_{33}&(a-1-\frac{2}{a})x_{23}\\
               0 & (a-1-\frac{2}{a})x_{32}& cx_{22}
              \end{bmatrix}.
\end{align*}
Since $\Phi[a,b,c]$ is $2$-positive but not completely positive, by condition $(*)$ the matrices $C_{\Phi_1}$ and partial transpose of $C_{\Phi_2}$ are positive, implying that $\Phi_1$ is completely positive and $\Phi_2$ is completely copositive, respectively. Note that our method of writing the $2$-positive map $\Phi[a,b,c]$ as a sum of a completely positive map and a completely copositive map differs from another method mentioned 
in \cite[proof of Theorem 3.4]{gchoi}:
$$\Phi[a,b,c]=(1-\sqrt{bc})\Phi\bigl[\frac{a-\sqrt{bc}}{1-\sqrt{bc}},0,0\bigl]+\sqrt{bc}\Phi\bigl[1,\sqrt{\frac{b}{c}},\sqrt{\frac{c}{b}}\bigl].$$
So the decomposition of the $2$-positive map $\Phi[a,b,c]$ into a sum of a completely positive map and a completely copositive map is not unique.
\end{exmp}

In view of the previous results, it is natural to pose the following
\begin{quest}
\label{Quest3.9}
Does there exist a 2-positive but indecomposable map in $B(M_3(\mathbb{C}),M_4(\mathbb{C}))$?
\end{quest}
Most likely this question has an affirmative answer.

\section{Acknowledgement}
The authors are indebted to Professor M.-D. Choi for his sharing of the decomposition of the Choi matrix for a 2-positive map \cite{choi2}.
Subsequently, Professor M.-D. Choi and the authors independently proved Conjecture 1.1.
This work is partially supported by Singapore Ministry of Education Academic Research Fund Tier 1 Grant (No. R-146-003-193-112).

\end{document}